\newcommand\independent{\protect\mathpalette{\protect\independenT}{\perp}}
\def\independenT#1#2{\mathrel{\rlap{$#1#2$}\mkern2mu{#1#2}}}
\titleformat*{\section}{\large\bfseries}
\titleformat*{\subsection}{\large\bfseries}
\newcounter{parentnumber}
\newtheorem{theorem}{Theorem}
\newtheorem{algorithm}{Algorithm}
\newtheorem{assumption}{Assumption}
\newtheorem{corollary}{Corollary}
\newtheorem{definition}{Definition}
\newtheorem{remark}{Remark}
\newenvironment{proof}[1][Proof]{%
  \noindent\textbf{#1.}%
}{%
  \ \rule{0.5em}{0.5em}%
  \vspace{0.5em} 
}
\definecolor{lightblue}{RGB}{200, 220, 255}      
\definecolor{lightred}{RGB}{255, 200, 200}       
\definecolor{lightgreen}{RGB}{200, 255, 200}    
\definecolor{lightpurple}{RGB}{220, 200, 255}
\providecommand{\U}[1]{\protect\rule{.1in}{.1in}}
\newcommand{\bluetimes}{\textsuperscript{\textcolor{blue}{\textbf{x}}}} 
\newcommand{\redtimes}{\textsuperscript{\textcolor{red}{\textbf{x}}}} 
\renewcommand*{\backref}[1]{}
\renewcommand*{\backrefalt}[4]
  \author{Santiago Acerenza,  Julian Martinez-Iriarte, \\ Alejandro S\'{a}nchez-Becerra, Pietro Emilio Spini\thanks{\textit{Acerenza}: Universidad ORT,  acerenza@ort.edu.uy. \textit{Martinez-Iriarte}: UC Santa Cruz, jmart425@ucsc.edu. \textit{Spini}: University of Bristol, pietro.spini@bristol.ac.uk. \textit{S\'{a}nchez-Becerra}: Emory University, alejandro.sanchez.becerra@emory.edu. We thank Sukjin Han, Natalia Lazzati, Michael Leung, Jessie Li, Maximillian Kasy, Sokbae ``Simon'' Lee and seminar participants at Encounters in Econometric Theory at Oxford (May 2024), University of Bristol (June 2024), LAMES (November 2024), Southern Economic Association (November 2024), and UC Santa Cruz (February 2025) 
 for the helpful discussion on the earlier version of this paper titled \textit{``Bounds on treatment effect parameters with spillovers, unrestricted choice and strategic interactions: a computational approach."} All errors are our own. } }
\begin{document}
	\setstretch{1}
	\title{{\LARGE  Bounds for within-household \\ encouragement designs with interference 
 }}

	\maketitle

	\newsavebox{\tablebox} \newlength{\tableboxwidth}


\begin{center}

First Draft: May 31st, 2024 \\

\

		\large{\textbf{Abstract}}
	\end{center}

We obtain partial identification of direct and spillover effects in settings with strategic interaction and discrete treatments, outcome and independent instruments. We consider a framework with two decision-makers who play pure-strategy Nash equilibria in treatment take-up,  whose outcomes are determined by their joint take-up decisions. We obtain a latent-type representation at the pair level.
We enumerate all types that are consistent with pure-strategy Nash equilibria and exclusion restrictions, and then impose conditions such as symmetry, strategic complementarity/substitution, several notions of monotonicity, and homogeneity. Under any combination of the above restrictions, we provide sharp bounds for our parameters of interest via a simple Python optimization routine. Our framework allows the empirical researcher to tailor the above menu of assumptions to their empirical application and to assess their individual and joint identifying power.\\

	\

	\textbf{Keywords:} Heterogeneity, instrumental variables, spillover effects, partial identification.

	\textbf{JEL Codes:} C31, C36

	\newpage

	\doublespacing




\section{Introduction}\label{Sintro}

Policy experiments within households, in which either the household head or their spouse receive a treatment offer, are very widespread in the social sciences. For example, there have been many studies on the impact of gender-targeted conditional cash transfers to parents \citep{armand2020effect,benhassine2015turning,haushofer2016short}, gender-targeted family planning \citep{d2022women,mccarthy2019intimate}, the impact of banking vouchers \citep{dupas2019effect}, or voter turnout encouragement \citep{foos2017all,nickerson2008voting,sinclair2012detecting}. Analogously, there are many household experiments with siblings, particularly in educational contexts \citep{dizon2014parents,rogers2017intervening}.

Household experiments are likely to exhibit two forms of network interference: $(i)$ treating one member of the household can impact the other; and $(ii)$ there may be strategic bargaining to decide who should take up the treatment, if voluntarily offered. This ``two-stage'' interference phenomenon opens up an avenue for treatment effects to be mediated by strategic take-up decisions. In turn, this presents a unique set of identification issues in experiments with non-compliance, which have not been the focus of the most of the existing literature. One strand has focused on identifying complier-type effects that are only valid if agents are $(i)$ not best-responding to others' take-up, and $(ii)$ weakly encouraged to enroll via offers \citep{Blackwell2017,DITRAGLIA20231589,Vazquez-Bare2020,ryu2024local}. Another strand has focused on interpreting intention-to-treat or instrumental variable estimates as weighted averages of subgroup effects \citep{KangImbens2016,ImaiJiangMalani2021}. 

We propose an approach to partially identify average treatment effects that is robust to arbitrary outcome and treatment interference. The benefit of our approach is that we can bound average effects for the entire population--not just local effects on compliers--, and under cases of low non-compliance, obtain at least sign preservation. The key to our approach is to define a large type space that captures relevant 
strategic behaviors. While the exact proportion of people from each type may be unknown, it is constrained by support restrictions and by the joint distribution of observables (outcome, take-up, and instruments) across all members of the household. Computationally, we show that partial bounds on treatment effects can be obtained from a tractable linear program in the spirit of  \cite{balke1997bounds}, and illustrate our approach with two empirical applications.

For us to be able to model assumptions on strategic behavior in a causal way, we introduce a novel notion of take-up best response function. In potential outcomes language, this means that an individual's treatment can depend on the vector of joint household offers and the treatment decision of others. As is common in game-theoretic models, there may be multiple equilibria, and we complete the model by introducing a class of potential equilibrium selection rules, where households randomize across Nash equilibria.\footnote{\cite{Lazzati2025} develop a procedure to test Nash equilibirum play in a game with monotone best responses.} Modeling best-responses allows us to define counterfactual estimands for $(i)$ ``fixed allocation'' effects where treatment assignment is enforced by the policy maker, e.g. both treated vs both control, and $(ii)$ ``policy targeting'' estimands where some individuals are forced into treatment while others best respond.

We contribute theoretically by assessing the identifying power (or lack thereof) attributed to different restrictions on the household type space. To state our results, we define an identified set $\Theta(\mathcal{T})$ associated with a type space $\mathcal{T}$. We say that a restriction $\mathcal{T}'\subseteq \mathcal{T}$ is irrelevant if $\Theta(\mathcal{T}') = \Theta(\mathcal{T})$. We show that restriction irrelevance depends on whether a set is ``closed'' with respect to a particular set of restrictions. Our findings allow researchers to verify whether their assumptions lead to such closed type spaces.

Our first main theoretical result is that researchers can restrict attention to types with deterministic (as opposed to stochastic) equilibrium selection rules without loss of generality. Absent further structure, there are infinitely many stochastic equilibrium selection rules even if outcomes, take-up, and instruments are all discrete. Fortunately, the type space of deterministic rules is finite, and this allows us to frame the partial identification problem as a tractable linear program in matrix form. Building on this basic structure, we provide guidance on how researchers can layer-in further assumptions with the aim of obtaining tighter bounds.

Our second theoretical result is to analyze the identifying power of two commonly discussed restrictions: best-response dominance (other player's actions irrelevant for take-up) and best-response symmetry (identical types within households). Surprisingly, whether these assumptions are relevant or not depends on the type of estimand. On one hand, neither assumption can narrow the identified set for ``fixed allocation'' estimands, under mild closure conditions.\footnote{Closure conditions can be violated when too many restrictions are stacked together. In fact, we give a simple counter-example where joint assumptions of dominance and symmetry can be falsified by the data. Under certain conditions, dominance and symmetry can be viewed as mutually exclusive modeling strategies.} On the other hand, we provide a simple example where imposing either dominance or symmetry can strictly narrow the set for ``policy targeting'' estimands. We find similar results for supermodularity and submodularity restrictions.

Our theoretical findings have meaningful implications for empirical practice in terms of how we analyze ``fixed allocation'' or ``policy targeting effects''. For ``fixed allocation'' estimands, the combination of deterministic rules and best-response dominance (which we show are irrelevant for the identified set), leads to a mathematically equivalent setup to a canonical partial identification problem with exogenous instruments and endogenous treatment \citep{Manski1990,balke1994counterfactual,balke1997bounds,mourifie2015sharp,chesher2017generalized, Laffer2019, Gunsilius2020, kitagawa2021identification, HanYang2024, bai2024identifying}.\footnote{For a review aimed at the biostatistics literature see \cite{Swanson2018}.} That means that researchers can bring familiar tools to tackle situations with interference, and these would still be robust to the presence of strategic agents and arbitrary equilibrium selection. For ``policy targeting'' estimands, dominance may constrain the identified set, and if researchers want to be agnostic, they could use our more general setup that allows for arbitrary equilibrium selection rules.

Our procedure is also well equipped to  encode the researcher's application-specific restrictions in a comprehensive, yet tractable way, and assess their individual and joint identification power. We can incorporate a wide range of restrictions on the features of the data-generating process, capturing the researchers' institutional knowledge and the specifics of their application. For example, the researcher may have credible knowledge about the pairs assortment, the relative timing of the offer, treatment and outcome, the information structure, the strength of the incentives or the strategic interaction of the players. This information could thus  be exploited  via our procedure to provide identification of relevant parameters. In addition, researchers may achieve results by combining non-nested assumptions. Our procedure assists in categorizing such scenarios, as demonstrated in cases involving no strategic interaction alongside either monotonicity assumptions or a form of irrelevance of the instrumental variable for other members of the group.    
\par

\textbf{Related Literature} Our computational approach is heavily inspired by \citet{balke1994counterfactual,balke1997bounds} who introduce a probabilistic response function framework where the structural relationship between observable variables are encoded as additional latent variables. In the original \citet{balke1994counterfactual,balke1997bounds} exercise there is no interference of any kind, and take-up behavior is completely determined by the classic four principal strata: always-takers, compliers, never-takers, and defiers. Within this framework, a latent type encodes both the strata that an individual belongs to and their vector of potential outcomes under different treatment statuses. In our case, because strategic interaction is allowed, an individual latent type must specify the treatment response for any hypothetical value of their partner treatment, for all possible instrument values.\footnote{An individual latent type coincides with the collection of the individuals best response function (BRF), one for each of the games induced by varying the value of the instruments. The joint distribution of BRFs (for all individuals in a group can be the target for (partial) identification in its own right, like in the setting of \citet{kline2012bounds}, where no instruments are present.}
Moreover, because the unit of analysis is the pair, each latent type concatenates the two individual latent types (potential treatment response and potential outcomes) for each of the two individuals. While this representation is agnostic about the network formation process, it makes it easy to transparently incorporate further assumptions on matching on latent pair types. These type of assumptions can be justified by the empirical design and, as we show below, may carry identifying power. \par



Extending the \citet{balke1994counterfactual,balke1997bounds} formulation to a setting with 
strategic response behavior is not straightforward due to the presence of multiple equilibria. Crucially, with multiple equilibria, the mapping from the latent types to the observable probabilities is no longer unique, a feature deeply linked to model incompleteness \citep{Jovanovic1989,tamer2003incomplete,honore2006bounds,Chesher2010,hahn2010bounds,chen2012bounds,kitagawa2021identification,HanBalat2023}. When multiple equilibria do arise, being agnostic about the equilibrium selection mechanisms (which we represent as a distribution over deterministic equilibria) means that we can suitably extend the set of types and encode it as an additional optimization variable in the linear program formulation. Further, Theorem \ref{thm:equivalence_randomization} shows that when certain conditions are satisfied, encoding the equilibrium section rules is not even necessary. We describe this situation in detail in Section \ref{Srestrictions}. 
\par 

Our paper complements the results in \cite{bai2024inference}, who propose an approach to bound treatment effect parameters with multi-valued outcomes, treatments and instruments. Both their paper and ours are generalizations of \cite{balke1997bounds} in that we recast the problem of partial identification as a linear program where the criterion is a particular estimand, equality restrictions map latent types to observed behavior, and inequality constraints encode bounds on proportions and certain sub-restrictions of types. 
While \cite{bai2024inference} assume SUTVA, our approach explicitly focuses on interference and encodes a rich enough set of types to capture other features that uniquely arise in this type of setting, \textit{e.g.}, strategic response, multiple equilibria, and group symmetry. These are not present under SUTVA. Our contribution is related to proposing and analyzing the properties of these specific types of restrictions and their empirical content in applied examples.

While we focus on ``global'' parameters, recent work has focused on whether any effects can be point-identified for a meaningful subgroup of people under different assumptions on their take-up behavior. This is the avenue taken, among others, by \cite{Blackwell2017}, \cite{Vazquez-Bare2017}, \cite{Vazquez-Bare2020}, \cite{DITRAGLIA20231589}, and \cite{kormos2023}. In order to obtain meaningful local effects from instrumental variables one requires strong monotonicity assumptions \citep{blandhol2022tsls}, that are harder to satisfy in settings with interference. This is because when multiple instruments and strategic interactions are present, the number of possible latent types grows exponentially, absent further restrictions.\footnote{In the original LATE framework of \citet{Imbens1994} there are only 4 latent types --always-takers, compliers, never-takers, defiers-- and the uniform monotonicity assumption rules out only one of them,  usually the defiers.} As a result, the point-identified estimand might represent a vanishingly small percentage of the underlying population, making extrapolation even harder or less credible. 

The possibility of strategic take-up relates our paper to the econometric analysis of games. 
An excellent review is provided in \cite{DePaula2013}. The focus of most of this literature corresponds to what in our setting are the first stage parameters, \textit{e.g.}, entry games as treatment take-up. For example, \cite{Lazzati2015} focuses on the first stage and is a combination of IOR with strategic interactions assuming super modularity, and treatment monotonicity. Instead, our ultimate goal are various ATEs, \textit{i.e.}, the second stage. Since the relationship between choice behaviour and outcome  is left unspecified, many of the usual game-theoretical restrictions such as supermodularity, submodularity, dominance, and equilibrium selection do not yield any identifying power. A notable exception, and closely related to our paper is \cite{HanBalat2023}, where the focus is actually on different ATEs and LATEs of endogenous treatments which result from the strategic interaction of the agents. The main difference is that, contrary to our paper, there are no spillovers at the instrument level in their setting. This means they assume a particular best-response function that depends on own offer and other's take-up, but not others' offers. Moreover, \cite{HanBalat2023} model take-up behaviour under the assumptions of strategic substitution/complementarity and symmetry of the other players' take-up behavior. This is one of the many restrictions we can impose on the latent space.\footnote{See Section \ref{Srestrictions_takeup} for a list.} 

	\par 
	The rest of the paper is organized as follows, in Section \ref{Sframework} we introduce notation and the econometric framework, the latent type representation and in Section \ref{sec: lin_prog} we describe its mapping to observable distribution and set up the linear program. Sections \ref{Srestrictions} and  \ref{Srestrictions_takeup} present the menu of restrictions we can impose, and their impact on the identified set. 
	Section \ref{Sempirical} illustrates our methods in an empirical application.  Finally, Section \ref{Sconclu} concludes. Additional results are collected in several online appendices.


\section{ Econometric Framework}\label{Sframework}

The sample contains $G$ groups, indexed by $g=1,\ldots,G$, each containing 2 individuals $i=1,2$. A subset of individuals and their partners receive an offer to participate in a program. We denote the offer status of a pair by $(Z_{ig},Z_{-ig})$, which can take values $ (z,z') \in \{0,1\}^2$. There may be non-compliance in the sense that eventual treatment take-up, denoted by $(D_{ig},D_{-ig})$, may differ from the offer status. We define a potential outcome as $Y_{ig}(d,d',z,z')$ and assume that it is not directly affected by offers, only by treatment status.
\begin{assumption}[Outcome Exclusion Restriction]\label{AsExclu} $\quad Y_{ig}(d,d',z,z')=Y_{ig}(d,d')$.
	\end{assumption}

	We model strategic take-up decisions by defining an individual best-response function $D_{ig}(z,z',d')$, to represent the take-up decision of $\{ig\}$ when the pair is offered $(z,z')$ and their partner chooses $d'$. We define an individual's type $S_{ig}$\footnote{$S_{ig}$ is supported on $\{0,1\}^{12}$.} as their set of contingent behaviors, given different treatment status and offers made.
	\begin{align*}
		\qquad S_{ig} := \begin{pmatrix} \{Y_{ig}(d,d')\}_{d,d' \in \{0,1\}^2 } \\ \{D_{ig}(z,z',d')\}_{z,z',d' \in \{0,1\}^3 }   \end{pmatrix}.
	\end{align*} 
    
    \subsection{Equilibrium Selection}
	
	Table \ref{tab:D-response} illustrates how different values of $(z,z')$ induce different games which in turn can encourage different take-up responses for a given realization $(s,s')$ of $(S_{1g},S_{2g})$. In general, for each pair of individual types $(s,s')$ and offers $(z,z')$, we can define the set of pure-strategy Nash  equilibria in take-up decisions $(D_{1g},D_{2g})$, which we denote by
	\begin{align*}
    \text{Nash}(s,s',z,z') := \left\{ (d,d') \in \{0,1\}^2: \quad \begin{array}{c} D_s(z,z',d') = d \\ D_{s'}(z',z,d) = d' \end{array} \right\}.
	\end{align*}

	In strategic settings such as ours, a single pair type $(s,s')$ can give rise to multiple equilibria: the pair shown in Table \ref{tab:D-response} is one such example. This feature can result in model incompleteness: some latent distribution of pair types $(S_{1g},S_{2g})$ may be compatible with more than one distribution of the observables. To complete the model, we associate to each type an equilibrium selection rule $E_g(z,z',d,d')$ which gives the probability of choosing $(d,d')$ given a choice of instruments $(z,z')$. Crucially, the equilibrium selection rule itself can be shifted by the instruments, \textit{i.e.}, the equilibrium selection rule is game-dependent.
    We stack all the equilibrium selection rules across $(z,z',d,d')$ into a single vector $E_g$.
    For any fixed $(z,z')$ the equilibrium selection rule belongs to the collection of probability distributions on $\{0,1\}^2$. The extreme points of such a collection are deterministic equilibrium selection rules, which we take as the building block to construct any equilibrium selection via convex combination.

    \noindent For this reason, it is convenient to encode a complete latent type as $(s,s',e) =: T_g \in \mathcal{T}$ where $e$ is vector of deterministic equilibria, one for each value of $(z,z')$. Then, a distribution $\mu \in \Delta(\mathcal{T})$ over the types triples is in fact a distribution over the latent type pairs $(s,s')$ together with a pair-specific equilibrium selection mechanism.\footnote{This is because the convex hull and the Cartesian product operations commute: \begin{equation*} \Delta(\mathcal{T}) = \textrm{Conv}(\mathcal{S} \times \mathcal{E}) = \textrm{Conv}(\mathcal{S}) \times \textrm{Conv}(\mathcal{E}) = \Delta(\mathcal{S}) \times \Delta(\mathcal{E}), \end{equation*} which shows that we can interpret $\mu \in \Delta(\mathcal{T})$ as specifying jointly a probability distribution on the types in $\Delta(\mathcal{S})$ together with a type-specific equilibrium selection in $\Delta(\mathcal{E})$. Moreover this latter term takes the form: \begin{equation*} \textrm{Conv}(\mathcal{E}) = \textrm{Conv}\left(\prod\limits_{(z,z')} \mathcal{E}_{det}\right) = \prod\limits_{(z,z')} \textrm{Conv}\left(\mathcal{E}_{det}(z,z')\right) = (\Delta_4)^4. \end{equation*}.} In Section \ref{subsec:identified_set}, distributions $\mu$ in a suitable subset $\Delta(\mathcal{T})$ will play a central role in our definition of the identified set and in its characterization via linear programming.

	\begin{table}[H]
		\centering
        {\small
		\begin{tabular}{|c|c|c|c|c|c|c|c|c|}
			\hline
            & \multicolumn{2}{c|}{$(z,z') = (0,0)$}   & \multicolumn{2}{c|}{$(z,z') = (0,1)$}  & \multicolumn{2}{c|}{$(z,z') = (1,0)$}   & \multicolumn{2}{c|}{$(z,z') = (1,1)$}
            \\
            \hline
          & (0,0,0) & (0,0,1) & (0,1,0) & (0,1,1) & (1,0,0) & (1,0,1) &(1,1,0) & (1,1,1) \\
			\hline
			$D_{ig}(z,z',d')$  & 0 & 0 & 0 & 0 & 0 & 0 & 0 & 1 \\
			\hline
			$D_{jg}(z,z',d)$  & 0 & 0 & 0 & 0 & 0 & 0 & 0 & 1 \\
			\hline
		\end{tabular}
        }
		\caption{An example of a pair of potential treatment responses. Columns represents a contingent response for each value of $(z,z',d')$ for the individual $1$  and $(z',z,d)$ for individual $2$. If we regard $D_{ig}(z,z',d)$ as the collection of individual $i$'s best response functions for the games induced by setting the incentives (i.e. the instruments) to $(z,z')$, then for the latent pair above, it is a dominant strategy to not participate in treatment for all values of the incentives,  hence $Nash(s,s',z,z') = \{(0,0)\}$ unless $z=(1,1)$, in which case $Nash(s,s',z,z') = \{(0,0),(1,1)\}$. In this latter case, players participate only if the other player participates.}
		\label{tab:D-response}
	\end{table}
    The collection of types $\mathcal{T}$ is very rich. We would like to impose some minimal conditions on this collection. We start by requiring that a type is consistent with Nash equilibrium play: 
	\begin{assumption}[Consistency with Nash]\label{AsNash}
		For a type $t = (s,s',e)$ we have $Nash(s,s',z,z') \ne \emptyset$ and 
		\begin{equation*}
			e(z,z',d,d') > 0  \iff (d,d') \in Nash(s,s',z,z').
		\end{equation*}
	\end{assumption}
	With Assumption \ref{AsNash} we require that the equilibrium selection rule only randomizes over the set of Nash equilibria that $(s,s')$ could produce. If such a set is empty, then we rule out the type $t$.
    Therefore, this assumption guarantees that there exist a Nash equilibrium with probability 1. Analogous assumptions about the existence of a pure-strategy Nash equilibrium, appear also in \citet{Tamer2010}, \citet{kline2012bounds} and \citet{HanBalat2023}. 
    Consider the latent type as a random variable $T_g$, whose distribution is given by $\mu$.
    We introduce a standard exogenous assumptions for the treatment assignments:
	
	\begin{assumption}[Independence]\label{AsIndep}
		For all $g$: 
		\begin{align*}
			(Z_{1g}, Z_{2g}) \independent T_g.
		\end{align*}
	\end{assumption}
	This says that instruments are assigned to a group $g$ without systematic knowledge of $i)$ any individual's potential treatment choice, $ii)$ any individual's potential outcome, (both encoded by the $(S_{1g},S_{2g})$ entries of $T_g$, and $iii)$ the equilibrium selection encoded by the $E_g$ entry of $T_g$. Random assignment of $(Z_{1g}, Z_{2g})$ is a sufficient condition for Assumption \ref{AsIndep} to hold automatically, but the instruments could also come from a quasi-experiment as long as \ref{AsIndep} is deemed plausible. Here we focus on unconditional independence but the analysis may be extended to the case where Assumption \ref{AsIndep} holds conditional on a set of covariates $(X_{1g}, X_{2g})$. \par

    \subsection{Parameter of interest}
    The main parameter of interest is the average effect  comparing two different treatment allocations  $(d_1,d_1')$ and $(d_2,d_2')$ as
	$$ \theta(\mu) := \int [Y_s(d_1,d_1') - Y_s(d_2,d_2')] \ \text{d}\mu(s,s',e). $$
	We also consider another type of ``policy targeting'' estimand, where one individual in the couple is assigned to the program without the option to refuse, the partner best responds optimally, and offers are fixed at $(z,z')$. For example, one might fix an individual's treatment to $d_1 = 1$, and only make an optional offer to their partner, i.e. $(z_1,z_2) = (0,1)$. This takes the form:
	$$ \gamma(\mu) := \int Y_s(d_1,D_{s'}(z_2,z_1,d_1))  \ \text{d}\mu(s,s',e). $$
	The estimand $\gamma$ depends on the types of both individuals in the couple. We can define variants of this estimand that switch which members is targeted, or whether we obtain the difference between two counterfactual estimands. Different contrasts offer alternative ways to decompose the role of peer effects within a household. \\

Some special cases of the parameter $\theta$ include \textit{average direct effect} (ADE), for $(d_1,d_2) = (1,0), (d'_1,d'_2)= (0,0)$, and \text{average spillover effects} (ASE), for $(d_1,d_2) = (0,1), (d'_1,d'_2)= (0,0)$.
Unlike \cite{Vazquez-Bare2020} and \cite{kormos2023}, we do not restrict attention to any particular latent subpopulation for which a LATE-type of parameter may be point-identified. There is a trade-off between targeting a subpopulation LATE which may be point-identified under some of the assumptions in Section 3, and a more aggregate parameters like the ADE we consider above, which is usually only partially identified. As the appropriate target may depend on the policy question that the empirical researcher is considering, we see our approach as complementary to \citet{Vazquez-Bare2020} and \citet{kormos2023}. \\

	\subsection{The identified sets}
    \label{subsec:identified_set}
	
	The researcher only observes data on outcomes and take-up, conditional on the offers (instrument assignments) made to the group. The observed data is fully characterized by the conditional probability mass function
	\begin{align}
		p(y,y',d,d'|z,z') := \Pr\left((Y_{1g},Y_{2g},D_{1g},D_{2g}) = (y,y',d,d') \mid (Z_{1g},Z_{2g}) = (z,z')\right).
		\label{eq:observed_probability}
	\end{align}
	The observed probabilities aggregate the behavior across many different pair types $t = (s,s',e)$. Our representation guarantees that a pair $t$ will randomize over Nash equilibria over $(d,d')$, and thereafter have a predetermined set of potential outcomes:
	\begin{align*} p(y,y',d,d'|z,z',t) &:= \Pr\left((Y_{1g},Y_{2g},D_{1g},D_{2g}) = (y,y',d,d') \mid (Z_{1g},Z_{2g}) = (z,z'), T_g = t\right)\\
    &= e(z,z',d,d')  \times \mathbbm{1}\{(Y_{s}(d,d'),Y_{s'}(d',d)) = (y,y')\}.\end{align*}
	The identified set will be determined by restrictions over the pair type space, which we denote by $\mathcal{T}$. A parameter of interest satisfies the definition $\theta(\mu)$ and the observed the probabilities aggregate over types.
	
	\begin{definition}[Identified Sets] Suppose that $T_{g} \in \mathcal{T}$ and that $\mu$ is supported over a $\mathcal{T}$ that satisfies Assumptions \ref{AsExclu}, \ref{AsNash}, and \ref{AsIndep}. The identified sets $\Theta(\mathcal{T})$ and $\Gamma(\mathcal{T})$ are
		$$\Theta(\mathcal{T}) = \left\{ \theta \in \mathbb{{R}}:  \begin{array}{c} \text{$\exists \text{ measure }\mu$, s.t.} \quad  \theta = \theta(\mu) \\ \int_{\mathcal{T}} p(y,y',d,d'|z,z',t)  \text{d}\mu(t)  = p(y,y',d,d'\mid z,z'), \quad \forall (y,y',d,d',z,z') \end{array} \right\}.$$
		$\quad \Gamma(\mathcal{T})$ is defined analogously, with $\gamma = \gamma(\mu)$.
		\label{definition:sharp_identified_set}
	\end{definition}
	Our baseline assumptions allow us to encode how the type distribution maps into observed probabilities like we describe in Section \ref{subsec: Mapping_observables}. Because all restrictions are linear in $\mu$, the sets $\Theta(\mathcal{T})$ and $\Gamma(\mathcal{T})$ are convex, and hence each identified set is entirely determined by the end points of the respective intervals. In the next section, after introducing the constraint that the observable data puts on $\mu$, we provide the characterization of the identified set via a linear program. 

\subsection{Deterministic Equilibria}
	
Definition \ref{definition:sharp_identified_set} is challenging to operationalize directly because the set of stochastic equilibirum selection rules is a continuum. To tackle this problem, we show that there is no loss in focusing on deterministic equilibrium selection rules, which are more tractable for estimation purposes.
	
	\begin{definition}[Deterministic Rules]
		We say that an equilibrium selection rule $E_g$ is deterministic if $E_g(z,z',d,d') $ is either zero or one, for all $(z,z',d,d')$.
	\end{definition}
	To ensure a fair comparison between restrictions on stochastic and deterministic types, we only focus on sets $\mathcal{T}$ that nest equivalent deterministic types as special cases.
	
	\begin{assumption}[Closure over deterministic rules] If $(s,s',e) \in \mathcal{T}$ and $e(z,z',d,d') > 0$, then there exists $(s,s',e^*) \in \mathcal{T}$ such that (i) $e^*(z,z',d,d') = 1$, and (ii) $e(\tilde{z},\tilde{z}',d,d') = e^*(\tilde{z},\tilde{z}',d,d')$, for all $(z,z') \ne (\tilde{z},\tilde{z}')$.
		\label{assump:closed_deterministic}

	\end{assumption}

    The closure condition $\mathcal{T}$ asks where ther type space is rich enough.  For example, suppose that $\mathcal{T}$ includes a pair type $(s,s',e)$ which randomizes between two Nash equilibria $(d,d') \in \{(0,0),(1,1)\}$ with probablity of 50\% when $(z,z') = (0,0)$. The closure condition implies that $\mathcal{T}$ is large enough to also include two more types with the same $s$ and $s'$, but where the equilibria $(d,d') = (0,0)$ and $(d,d')=(1,1)$ are each selected with 100\% probability, respectively.
    
	The following theorem shows that we can focus on deterministic equilibrium selection rules without loss of generality.
	
	\begin{theorem}
		\label{thm:equivalence_randomization}
		Let $\mathcal{T}^*$ be any set of pair types that satisfies Assumptions \ref{AsExclu}, \ref{AsNash},  \ref{AsIndep}, and \ref{assump:closed_deterministic}. Let $\mathcal{T} \subseteq \mathcal{T}^*$ be the subset of pair-types with deterministic-only equilibria selection rules. Then
		$$ \Theta(\mathcal{T}) = \Theta(\mathcal{T}^*). $$
		$$ \Gamma(\mathcal{T}) = \Gamma(\mathcal{T}^*). $$
		
	\end{theorem}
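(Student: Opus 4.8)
The plan is to prove the two set equalities by showing inclusion in both directions, with the nontrivial direction being $\Theta(\mathcal{T}^*)\subseteq\Theta(\mathcal{T})$ (and likewise for $\Gamma$). The inclusion $\Theta(\mathcal{T})\subseteq\Theta(\mathcal{T}^*)$ is immediate since $\mathcal{T}\subseteq\mathcal{T}^*$: any feasible $\mu$ supported on $\mathcal{T}$ is also a feasible $\mu$ supported on $\mathcal{T}^*$, producing the same value $\theta(\mu)$ and satisfying the same observable-matching constraints. So the work is to take an arbitrary $\theta\in\Theta(\mathcal{T}^*)$, witnessed by some measure $\mu^*$ on $\mathcal{T}^*$, and construct a measure $\mu$ on $\mathcal{T}$ (deterministic-only types) that achieves the same parameter value and matches the same observed probabilities $p(y,y',d,d'\mid z,z')$.

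The key construction is a disintegration/splitting argument. Fix a pair-type $(s,s')$ appearing in the support of $\mu^*$ with a stochastic rule $e$. By Assumption \ref{AsNash}, $e(\cdot\mid z,z')$ is a probability distribution supported exactly on $\mathrm{Nash}(s,s',z,z')$, and since the convex-hull and Cartesian-product operations commute (the footnote observation, $\mathrm{Conv}(\mathcal{E})=\prod_{(z,z')}\mathrm{Conv}(\mathcal{E}_{det}(z,z'))$), the rule $e$ is a convex combination $e=\sum_{k}\lambda_k\, e_k$ of deterministic rules $e_k$, each of which — by Assumption \ref{assump:closed_deterministic} applied coordinate-by-coordinate over $(z,z')$ — yields a type $(s,s',e_k)\in\mathcal{T}^*$, hence in $\mathcal{T}$ since $e_k$ is deterministic. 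I would then define $\mu$ by replacing the mass $\mu^*(\{(s,s',e)\})$ with $\sum_k \lambda_k\,\mu^*(\{(s,s',e)\})$ spread over the points $(s,s',e_k)$, and do this simultaneously for every stochastic type in the support of $\mu^*$ (passing to a disintegration of $\mu^*$ over the $(s,s')$-marginal if $\mu^*$ is not atomic). The resulting $\mu$ is supported on $\mathcal{T}$.

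It then remains to verify that $\mu$ is feasible and delivers the same $\theta$. For the observable constraints, note that $p(y,y',d,d'\mid z,z',t)=e(z,z',d,d')\cdot\mathbbm{1}\{(Y_s(d,d'),Y_{s'}(d',d))=(y,y')\}$ is linear in $e$ for fixed $(s,s')$, and since $e=\sum_k\lambda_k e_k$ and the outcome indicator depends only on $(s,s')$, the integral $\int p(y,y',d,d'\mid z,z',t)\,\mathrm{d}\mu(t)$ is unchanged by the splitting; the same linearity argument handles Assumption \ref{AsIndep} (the construction never touches the $(Z_{1g},Z_{2g})$-marginal) and confirms Nash-consistency of each new type. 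For the parameter, $\theta(\mu)=\int[Y_s(d_1,d_1')-Y_s(d_2,d_2')]\,\mathrm{d}\mu(s,s',e)$ depends on $\mu$ only through its $(s,s')$-marginal, which splitting preserves exactly, so $\theta(\mu)=\theta(\mu^*)$; for $\gamma$ the integrand $Y_s(d_1,D_{s'}(z_2,z_1,d_1))$ likewise depends only on $(s,s')$, so the identical argument gives $\Gamma(\mathcal{T})=\Gamma(\mathcal{T}^*)$. The main obstacle I anticipate is purely a bookkeeping one: handling the case where $\mu^*$ is not finitely supported, which requires stating the splitting as a measurable kernel $(s,s')\mapsto$ (distribution over deterministic completions) and invoking a standard disintegration theorem to glue it back into a bona fide measure on $\mathcal{T}$; everything else is the linearity of the maps $\mu\mapsto\theta(\mu)$ and $\mu\mapsto\int p(\cdot\mid\cdot,t)\,\mathrm{d}\mu(t)$ in the equilibrium-selection coordinate.
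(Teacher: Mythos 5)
Your proposal is correct and follows essentially the same strategy as the paper's proof: redistribute the mass of each stochastic-rule type $(s,s',e)$ onto deterministic-rule types sharing the same $(s,s')$, using Assumption \ref{assump:closed_deterministic} to guarantee those types exist, and then exploit linearity of the observable-matching map in $e$ together with the fact that $\theta$ and $\gamma$ depend only on the $(s,s')$-marginal. The only (cosmetic) difference is that the paper performs the replacement recursively over the four values of $(z,z')$ whereas you phrase it as a single convex decomposition $e=\sum_k\lambda_k e_k$ into fully deterministic rules — which, as you note, still requires applying the closure assumption coordinate-by-coordinate to justify that each $(s,s',e_k)$ lies in $\mathcal{T}^*$.
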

	Theorem \ref{thm:equivalence_randomization} proves that it is possible to reconstruct any identified set with only deterministic selection rules. The statement is written with generality in mind: the super-set $\mathcal{T}^*$ can include any number of additional restrictions on potential outcomes, best-responses, and selection rules, as long as they satisfy a minimal closure condition. The statement also applies to general types of parameters belonging to two main classes: $(i)$ parameters evaluating fixed $(d,d')$ allocations in $\Theta(\mathcal{T})$, and $(ii)$ treatment targeting parameters $\Gamma(\mathcal{T})$.
	
	The proof is constructive. For each pair $(s,s')$, we find an equivalence between two alternate ways of modeling multiplicity. On one hand, individuals might choose to randomize across equilibria. On the other hand, there might heterogeneity in coordination mechanisms/signals that lead to unique equilibria for a particular pair, that are known to agents but unknown to the researcher. Since choices are only observed in the aggregate, in the absence of additional assumptions, both modeling approach are observationally equivalent. This construction applies to both $\Theta(\mathcal{T})$ and $\Gamma(\mathcal{T})$, because the counterfactual parameters in either case only depend on the distribution of types $(s,s')$ and not on their equilibrium selection rules.
	    
\section{A Linear Programming formulation}\label{sec: lin_prog}

In this section, after briefly outlining the relationship between latent types and observable probabilities, we provide a simple linear programming characterization of the identified set.

\subsection{Mapping latent types to observed data}
\label{subsec: Mapping_observables}

The researcher only observes data on outcomes and take-up, conditional on the offers (instrument assignments) made to the group. The observed data is fully characterized by the conditional probability mass function $p(y,y',d,d'|z,z')$ defined in \eqref{eq:observed_probability}. Considering the conditioning set $(Z_{ig},Z_{jg},S_{ig},S_{jg}, E_g) = (z,z',s,s',e)$ we can also define a type-specific conditional probability $p(y,y',d,d'|z,z',s,s',e)$. Because we defined a latent $t$ as a triplet $(s,s',e)$, instead of just $(s,s')$, this conditional probability simply checks whether $(d,d')$ coincides with the deterministic equilibrium selected by $e$. As a result, $p(y,y',d,d'|z,z',s,s',e)$ takes the special form:
$$ p(y,y',d,d'\mid z,z',s,s',e) = \mathbbm{1}\{ e(z,z') = (d,d') \} \times \mathbbm{1}\{(Y_{s}(d,d'),Y_{s'}(d,d')) = (y,y')\}
$$ and its values are in $\{ 0,1 \}$. By the law of total expectations and Assumption \ref{AsIndep} the observed conditional probability and the latent one are related by:
\begin{equation} p(y,y',d,d'|z,z') = \sum_{(s,s',e) \in \mathcal{T}} p(y,y',d,d'\mid z'z',s,s',e)  \mu(s,s',e),
\end{equation} 
We can stack the set of observed conditional probabilities in a vector $p$ of size $K$ (for the binary case $K=2^6 = 64$) and the probabilities of latent types in a vector $\mu$ of size $|\mathcal{T}|$. We can also represent the set of type-specific probabilities in the $(K \times |\mathcal{T}|)$ matrix $A$, whose generic element is $p(y,y',d,d'\mid z,z',s,s',e)$. This leads to a convenient matrix representation
\begin{equation} p = A\mu.
	\label{eq:mapping_observed_matrix}
\end{equation}
Without imposing further restrictions, $|\mathcal{T}| > 64$, and hence the linear system is under-identified, similar to \cite{balke1997bounds}. This means that  multiple latent distributions $q$ are observationally equivalent in the sense that they induce the same $p$. 

The choice of including the equilibrium selection in our definition of a latent type guarantees that, under Assumption \ref{AsNash} the matrix $A$ is unique. Suppose we did not specify $e$ in a latent type, so a type is fully described by $(s,s')$. When $Nash(s,s',z,z')$ is a singleton, $p(y,y',d,d'|z,z',s,s'):= \sum\limits _{e} p(y,y',d,d'|z,z',s,s',e)$ is 0 for all except one combination of $(y,y',d,d')$, for which it is 1. In this case, $p(y,y',d,d'|z,z',s,s',e)$ is a degenerate probability distribution on $(d_i,d_j,y_i,y_j)$. When $Nash(s,s',z,z')$ contains more than one element (i.e. whenever the game features multiple equilibria) $p(y,y',d,d'|z,z',s,s')$ is an unknown, non-degenerate distribution on $(d_i,d_j,y_i,y_j)$. This is precisely because the equilibrium selection has not been specified. This means that a latent pair $(S_{ig}, S_{jg})$ could be mapped to more than one observable outcome. 

\subsection{Linear Programming Formulation}
Given a parameter $\theta$ and a set of latent types $\mathcal{T}$ we can write the bounds as value functions for the pair of linear programs below:

\begin{equation}\label{eq:unfeasible_program} 
\Theta(\mathcal{T}) = [\underline{\theta}(\mathcal{T}),\overline{\theta}(\mathcal{T})]
\end{equation} 
with \\
\begin{minipage}{0.45\textwidth}
\begin{align*}
    \underline{\theta}(\mathcal{T}) := \min_{\mu \in \Delta(\mathcal{T})} &\ c_{\theta}^T \mu \\
    \text{s. t.} &\ A \mu = p
\end{align*}
\end{minipage}
\hfill
\begin{minipage}{0.45\textwidth}
\begin{align*}
    \overline{\theta}(\mathcal{T}) := -\min_{\mu \in \Delta(\mathcal{T})} &\ -c_{\theta}^T \mu \\
    \text{s. t.} &\ A \mu = p
\end{align*}
\end{minipage}
\vskip 0.5cm
with $\mu \in \Delta(|\mathcal{T}|), c_{\theta} \in \mathbb{R}^{|\mathcal{T}|}, p \in{\Delta(K)}, A \in \textrm{Mat}(K, |\mathcal{T}|)$. The characterization of $\Gamma$ is analogous, for appropriate choices of the linear representer $c_{\gamma}$. 
The high dimensionality of our program prevents us from exploiting the duality to obtain closed form expressions of the bounds as in \cite{balke1997bounds}.

\begin{remark}
	Without further restrictions, some effects (direct, indirect, or total) are not identifiable for certain pairs of strategic types, even if we knew the type label, because some $(d,d')$ combinations are off the equilibrium path. For example, if $D_{ig}(z,z',d')= d'$ and $D_{jg}(z',z,d) = d$, then the only  pure strategy Nash equilibria are $\{(1,1),(0,0)\}$. This means that we can only compute total effects, but cannot decompose it into direct or indirect effects, because take-up pairs of $\{(0,1),(1,0)\}$ are not observed. There needs to be some form of randomization.   
\end{remark}

\subsection{Estimation}
The procedure described in \eqref{eq:unfeasible_program} is unfeasible because it depends on $p$. To estimate $\underline{\theta}(\mathcal{T})$ and $\overline{\theta}(\mathcal{T})$, we replace $p$ by its empirical counterpart, denoted by $\hat p$. Each entry of $p$ is of the form $p(y,y',d,d'|z,z')$, so that $\hat p$ consists of the empirical counterparts
\begin{align*}
 \hat p(y,y',d,d'|z,z') = \frac{\sum_{i=1}^n\mathds 1\left\{Y_i=y, Y_{-i}=y',D_i=d, D_{-i}=d',Z_i=z, Z_{-i}=z'\right\}}{\sum_{i=1}^n \mathds 1\left\{Z_i=z, Z_{-i}=z'\right\}}
\end{align*}
for all the possible combinations of $(y,y',d,d',z,z')$. 
The feasible version of \eqref{eq:unfeasible_program} is
\begin{equation}\label{eq:feasible_program} 
\hat{\Theta}(\mathcal{T}) = [\hat{\underline{\theta}}(\mathcal{T}),\hat{\overline{\theta}}(\mathcal{T})]
\end{equation} 
with $\hat{\underline{\theta}}(\mathcal{T}):= c'\hat{\underline{\mu}}$ and $\hat{\overline{\theta}}(\mathcal{T}):= c'\hat{\overline{\mu}}$ where\\
\begin{minipage}{0.45\textwidth}
\begin{align*}
    \hat{\underline{\mu}} := \arg\min_{\mu \in \Delta(\mathcal{T})} &\ c_{\theta}^T \mu \\
    \text{s. t.} &\ A \mu = \hat p
\end{align*}
\end{minipage}
\hfill
\begin{minipage}{0.45\textwidth}
\begin{align*}
    \hat{\overline{\mu}} := \arg\min_{\mu \in \Delta(\mathcal{T})} &\ -c_{\theta}^T \mu \\
    \text{s. t.} &\ A \mu = \hat p
\end{align*}
\end{minipage}

\subsection{Inference}
As pointed out by \cite{Hsieh2022Inference}, the solution to a linear program is, in general, not differentiable as a function of the inputs. This implies that the sampling variation in $\hat p$ might not lead to asymptotic normality of $(\hat{\underline{\mu}} , \hat{\overline{\mu}})'$. A consequence of Theorem 3.1 in \cite{FangSantos2018} is that the canonical bootstrap does not work if the limiting distribution is not Gaussian. There are several potential approaches to statistical inference involving linear programs.\footnote{Among the potential inferential procedures, there is \cite{bai2022testing}  which exploits subsampling and two-step method for testing moment inequalities. \cite{fang2023inference} which provides a novel geometric characterization of the feasible set to produce a valid testing procedure. \cite{syrgkanis2018inferenceauctionsweakassumptions}  which give finite sample inference methods exploiting concentration inequalities. \cite{FREYBERGER201541}  which provides two bootstrap procedures that estimate
the asymptotic distributions of the  value function of a linear program, one requiring a user specified tunning parameter and another one free of tunning parameters but under the restriction of uniqueness of the optimal solution. For another approach on linear programming with partially identified econometric models see \cite{voronin2024}.} We focus on providing pointwise confidence intervals for the upper and lower bounds of the parameter of interest, leveraging existing inferential methods for linear programs from \cite{horowitz2023inference}.
Their procedure gives finite-sample lower bounds on the coverage probabilities of the confidence intervals. 
We briefly describe their procedure in our setting. Recall that $\hat{p}$ contains 4 sub-vectors, denoted $p_z$, one for each instrument. Each of them has size 16 with entries equal to the empirical frequencies $\hat{p}(y,y',d,d'|z,z')$. Following \cite{horowitz2023inference} we consider replacing $\hat{p}$ with an additional choice variable $p$, and solve the relaxed program:
\begin{align*}
    \underline{\theta}_{CI} \min_{\mu \in \Delta(\mathcal{T}), \ p \in \mathcal{P}(\hat{p})} &\ c_{\theta}^T \mu \\
    \text{s. t.} &\ A \mu = p
\end{align*}
Here, instead of imposing the linear constraint $A \mu = \hat{p}$, we only require that $p$, the image of $A\mu$ is inside a region $\mathcal{P}(\hat{p})$, which always includes $\hat{p}$. From this is immediate that the above program is a relaxed version of the original LP problem. $\mathcal{P}(\hat{p})$ is a high probability region, for example a 95\% confidence region for $\hat{p}$. There are many options for this region, including a box or an ellipsoid. Because we want our procedure to adapt to cases where the randomized offers do not cover the all possible values of the instruments with positive probability, we construct the region as the product of up to 4 ellipses, one for each conditional distribution. That is, we take $\mathcal{P}(\hat{p}) = \Pi_{z} \mathcal{P}(\hat{p}_z)$ where, for each instrument:
\begin{equation*}
	\mathcal{P}(\hat{p}_z) := p_z^T \Upsilon p_z - 2 p_z^T \Upsilon \hat{p}_z \leq n_z^{-1} \cdot \kappa_{z,(1-\alpha)} - \hat{p}_z^T \Upsilon \hat{p}_z
\end{equation*}
with critical value $\kappa_{z,(1-\alpha)}$, and a positive definite, deterministic, finite matrix $\Upsilon$. 
Recall that because $A\mu=p$ we can write each of these regions as a quadratic constraint in $\mu$.\footnote{Defining the $A_k$ as the $(16 \times \mathcal{T})$ sub-matrix corresponding to $p_k$ we can write the regions as:
\begin{equation*}
\mathcal{P}(\hat{p}_k) := q^T A_k^T \Upsilon^{-1} A_k q - 2 q^T A_k^T \Upsilon \hat{p}_k \leq n_z^{-1} \cdot \kappa_{z,(1-\alpha)} - \hat{p}_k^T \Upsilon \hat{p}_k
\end{equation*}} 
The collection of values $\kappa_{z,(1-\alpha)}$ is chosen so that $\mathbb{P}(\mathcal{P}(\hat{p})) \geq 1-\alpha$.  The critical value can be obtained from theorem 2.3 from \cite{horowitz2023inference}. Regarding the choice of matrix $\Upsilon$, because $\hat{p}_k$ is a draw of the multinomial distribution, it satisfies $1^T \hat{p}_k=1$. As a consequence, there are at most $15$ linearly independent entries in $\hat{p}$ and  $\hat{\Sigma} = \textrm{diag}(\hat{p}) - \hat{p} \hat{p}^T$ is rank deficient. Because the choice of $\Upsilon = \hat{\Sigma}^{-1}$ is not feasible, we consider the robust choice $\Upsilon = I$. One could also use regularization to choose $\Upsilon = (\hat{\Sigma} + \epsilon I)^{-1}$ interpolating $\hat{\Sigma}$ and the identity matrix with same dimension. Note that $\hat{\Sigma}$ is positive semi-definite. Moreover, because $\epsilon I \hat{\Sigma} = \hat{\Sigma} \epsilon I$ we can simultaneously eigen-decompose $\epsilon I$ and $\hat{\Sigma}$. If $\hat{\Sigma} = Q^T \Lambda Q$ then:
\begin{equation*}
	\epsilon I + \hat{\Sigma} = Q^T(\epsilon I + \Lambda)Q
\end{equation*}
so $(\epsilon I + \hat{\Sigma})$'s $j$-th eigenvalue is  $\epsilon + \lambda_j$. All entries in $\Lambda$ are non-negative so now the minimal eigenvalue of $(\epsilon I + \hat{\Sigma})$ is bounded below by $\epsilon$ making $\Upsilon=(\epsilon I + \hat{\Sigma})^{-1}$ a valid choice for the procedure of \citet{horowitz2023inference}.

We consider two choices of $\kappa_{z,(1-\alpha)}$. The first is obtained by bootstrapping the quadratic form. The second one is as suggested by method 2 of \cite{horowitz2023inference}. When $\hat{p}_z$ is in general position, this amounts to $\kappa_{z,(1-\alpha)}=\sigma^2 \times 2\times 15\times  log(2\times \frac{15}{\alpha})$ where $\sigma^2$ is the maximum of the variance of each element of $\hat{p}$ and 15 is the number of linearly independent elements in $\hat{p}_z$.

	\section{Relevance of Restrictions and Dimensionality Reduction}\label{Srestrictions}
	
	In addition to the restrictions imposed by the observable distribution discussed in the previous section, there are a number of restrictions that researchers might like to impose with the goal of shrinking the identified set. In this section, we prove a set of negative results showing that some of these choices do not reduce the size of the identified set. Nevertheless, these results turn out to be advantageous from a computational side. The reason being that they narrow the scope of what the researcher needs to focus on, and enables us to formulate more tractable versions of the problem.

	\subsection{Best Responses: Dominance and  Super/Sub Modularity}
	
	There a number of restrictions on $\mathcal{T}$ that can capture strategic complementarity or substitution. A supermodular game is one where individual payoffs for an action are weakly higher if other players also take that action. For household experiments, this might emerge when taking up treatment is mutually beneficial to both members. Differing payoffs translate into specific restrictions on the best response functions.
	
	\begin{definition}[Supermodularity] If $d_1' \le d_2' $, then 
		$D_{ig}(z,z',d_1') \le D_{ig}(z,z',d_2')$
	\end{definition}
	By contrast, in a submodular game payoffs are weakly lower, encouraging substitution.
	\begin{definition}[Submodularity] If $d_1' \le d_2' $, then 
		$D_{ig}(z,z',d_1') \ge D_{ig}(z,z',d_2')$
	\end{definition}
	Dominant best-response functions arise when the take-up decision does not depend on the partner's actions, which occurs when best response are both supermodular and submodular.
	\begin{definition}[Dominance] For all $(d_1',d_2')$,
		$D_{ig}(z,z',d_1') = D_{ig}(z,z',d_2')$
	\end{definition}
	Recent work on causal inference, interference, and non-compliance relies on the dominance assumption to identify local effects \citep{DITRAGLIA20231589,Vazquez-Bare2020}.
	
	\begin{assumption}[Closure under dominant strategies]
		\label{assump:closure_dominant}
		If $(s,s',e) \in \mathcal{T}$ and $e(z,z',d,d') = 1$, then there exists $(\tilde{s},\tilde{s}',e) \in \mathcal{T}$ such that: (i) for $(z,z')$ take-up best-responses are dominant for $(z,z')$, $D_{\tilde{s}}(z,z',\tilde{d}') = d$ and  $D_{\tilde{s}'}(z',z,\tilde{d}) = d'$ for all $(\tilde{d},\tilde{d}')$, (ii) take-up best-responses of $(\tilde{s},\tilde{s}')$ are identical to those of $(s,s')$ for $(\tilde{z},\tilde{z}') \ne (z,z')$ (iii) outcome types are equivalent, $Y_s(\tilde{d},\tilde{d}')= Y_{\tilde{s}}(\tilde{d},\tilde{d}')$ and $Y_{s'}(\tilde{d}',\tilde{d}) = Y_{\tilde{s}'}(\tilde{d}',\tilde{d})$.    
	\end{assumption}
	Assumption \ref{assump:closure_dominant} holds when the type space is sufficiently rich to include agents with dominant strategies that can $(i)$ mimic the same equilibrium as strategic types, and $(ii)$ they have the same potential outcomes. Considering sets of types $\mathcal{T}$ that are closed under dominant strategies makes sense if the researcher wishes to be agnostic about strategic behavior.
	
	\begin{theorem}
		\label{thm:equivalence_dominance}
		Let $\mathcal{T}$ be any set of pair types that satisfies Assumptions \ref{AsExclu}, \ref{AsNash},  \ref{AsIndep}, \ref{assump:closed_deterministic}, and \ref{assump:closure_dominant}. Let $\mathcal{T}_{dominant} \subseteq \mathcal{T}$ be the subset of types where all individuals have dominant strategies. Then
		$$ \Theta(\mathcal{T}) = \Theta(\mathcal{T}_{dominant}). $$
	\end{theorem}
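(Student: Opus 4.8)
The plan is to prove the two inclusions of the equality separately: the direction $\Theta(\mathcal{T}_{dominant}) \subseteq \Theta(\mathcal{T})$ is trivial, and the reverse direction is handled by a type-by-type replacement argument that converts each strategic type in the support of an optimal $\mu$ into an observationally and estimand-equivalent dominant-strategy type.

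The easy inclusion: since $\mathcal{T}_{dominant} \subseteq \mathcal{T}$ we have $\Delta(\mathcal{T}_{dominant}) \subseteq \Delta(\mathcal{T})$, so any $\mu$ feasible for the program defining $\Theta(\mathcal{T}_{dominant})$ is feasible for the one defining $\Theta(\mathcal{T})$ and attains the same value $\theta(\mu)$. For the reverse inclusion I would first reduce to deterministic selection rules: by Theorem \ref{thm:equivalence_randomization}, $\Theta(\mathcal{T}) = \Theta(\mathcal{T}_{det})$, where $\mathcal{T}_{det} \subseteq \mathcal{T}$ collects the types with deterministic-only equilibrium selection rules. I would also observe that $\mathcal{T}_{dominant} \subseteq \mathcal{T}_{det}$: if every individual has a dominant strategy in each of the four games, then $\mathrm{Nash}(s,s',z,z')$ is a singleton for every $(z,z')$, so Assumption \ref{AsNash} forces the selection rule to place probability one on that singleton. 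Hence it suffices to show $\Theta(\mathcal{T}_{det}) \subseteq \Theta(\mathcal{T}_{dominant})$.

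Fix $\theta \in \Theta(\mathcal{T}_{det})$, attained by some $\mu \in \Delta(\mathcal{T}_{det})$ with $A\mu = p$. For each $t = (s,s',e)$ in the support of $\mu$ I would construct a dominant-strategy type $\phi(t) \in \mathcal{T}_{dominant}$ by invoking Assumption \ref{assump:closure_dominant} once for each instrument value, in sequence. Because $e$ is deterministic it selects some profile $(d_{zz'},d_{zz'}')$ in each game $(z,z')$. Applying Assumption \ref{assump:closure_dominant} at $(0,0)$ produces a type $t_1 = (s_1,s_1',e) \in \mathcal{T}$ whose best responses are dominant and equal to $(d_{00},d_{00}')$ in game $(0,0)$, coincide with those of $t$ in the other three games, share the same $e$, and have the same potential outcomes as $t$. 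Applying the assumption to $t_1$ at $(0,1)$ — using clause (ii), which leaves the best responses in the remaining games, and in particular the already-dominant game $(0,0)$, untouched — gives $t_2$ dominant in games $(0,0)$ and $(0,1)$; iterating over $(1,0)$ and $(1,1)$ yields $\phi(t) := t_4$, dominant in all four games, with the same deterministic $e$ and the same potential outcomes as $t$ (clause (iii) composes across steps). In a game with dominant strategies the unique Nash equilibrium is the pair of dominant actions, which by clause (i) is exactly the profile $e$ selects; so $\phi(t)$ satisfies Assumption \ref{AsNash} and, lying in $\mathcal{T}$ with dominant strategies throughout, belongs to $\mathcal{T}_{dominant}$.

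Finally I would verify that $\mu' := \phi_* \mu$ witnesses $\theta \in \Theta(\mathcal{T}_{dominant})$. For a deterministic type the corresponding column of $A$ — with generic entry $\mathbbm{1}\{e(z,z') = (d,d')\}\,\mathbbm{1}\{(Y_s(d,d'),Y_{s'}(d,d')) = (y,y')\}$ — depends on the type only through $e$ and the potential outcomes; since $\phi(t)$ shares both with $t$, the two columns coincide, and so $A\mu' = A\mu = p$. Likewise the entry of $c_\theta$ associated with $t$ involves $t$ only through $Y_s, Y_{s'}$ evaluated at the fixed allocations $(d_1,d_1')$ and $(d_2,d_2')$, which clause (iii) preserves, so $\theta(\mu') = \theta(\mu) = \theta$. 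This gives $\Theta(\mathcal{T}_{det}) \subseteq \Theta(\mathcal{T}_{dominant})$, completing the proof. I expect the main obstacle to be the bookkeeping in the four-step construction: one must check that each new application of Assumption \ref{assump:closure_dominant} does not undo dominance already installed in earlier games and that the unchanged selection rule $e$ remains Nash-consistent for $\phi(t)$ — both of which are controlled by clauses (i)–(ii). Note the argument is specific to $\Theta$: clause (iii) preserves outcome types but not best responses, which is all the fixed-allocation estimand sees, whereas $\gamma(\mu)$ also depends on $D_{s'}$ off the equilibrium path, so no analogous statement for $\Gamma$ should hold.
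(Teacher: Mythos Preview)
Your proposal is correct and follows essentially the same approach as the paper: reduce to deterministic selection rules via Theorem~\ref{thm:equivalence_randomization}, then recursively apply Assumption~\ref{assump:closure_dominant} across the four instrument values to replace each type by an observationally and estimand-equivalent dominant-strategy type. The only cosmetic difference is ordering---the paper updates the whole measure one instrument value at a time, whereas you process all four instrument values for each type and then push forward---but the substance, including the key observations that clause~(ii) preserves dominance installed in earlier steps and that clause~(iii) suffices because $\theta(\mu)$ depends only on potential outcomes, is identical.
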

	
	Theorem \ref{thm:equivalence_dominance} states that the identified set cannot be made smaller by restricting attention to pair types with dominant strategies, as long as the closure condition in Assumption \ref{assump:closure_dominant} is satisfied.
	
	Since all dominant strategies are simultaneously supermodular and submodular, then by a ``sandwich'' argument we can show that these other restrictions have no identifying power either.
	\begin{corollary}
		Under the conditions of Theorem \ref{thm:equivalence_dominance}, 
		let $\mathcal{T}_{supermodular}$ and $\mathcal{T}_{submodular}$ denote the subsets where all individuals have either supermodular or submodular best-responses. Then
		$$ \Theta(\mathcal{T}_{supermodular}) = \Theta(\mathcal{T}_{submodular}) =   \Theta(\mathcal{T}_{dominant}). $$
	\end{corollary}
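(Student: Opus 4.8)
The plan is to prove the chain of equalities by a monotonicity-plus-sandwich argument that leans entirely on Theorem~\ref{thm:equivalence_dominance}. Two ingredients are needed: (i) the map $\mathcal{T}\mapsto\Theta(\mathcal{T})$ is monotone under set inclusion, and (ii) every pair type with dominant best-responses is simultaneously supermodular and submodular, so that $\mathcal{T}_{dominant}$ sits inside $\mathcal{T}_{supermodular}$ and $\mathcal{T}_{submodular}$, which in turn sit inside $\mathcal{T}$.

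First I would record the monotonicity lemma: if $\mathcal{T}_1\subseteq\mathcal{T}_2$ are admissible type spaces (both satisfying Assumptions~\ref{AsExclu}--\ref{AsIndep}), then $\Theta(\mathcal{T}_1)\subseteq\Theta(\mathcal{T}_2)$. This is immediate from Definition~\ref{definition:sharp_identified_set}: any measure $\mu$ supported on $\mathcal{T}_1$ with $A\mu=p$ extends by zero to a measure supported on $\mathcal{T}_2$ that still satisfies $A\mu=p$ and yields the same value $\theta(\mu)$, hence is feasible for the larger problem. Equivalently, in the linear-programming form \eqref{eq:unfeasible_program}, enlarging the support only relaxes the feasible set, so the optimal value interval can only grow.

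Second I would verify the nesting. By the Dominance definition, a dominant best-response satisfies $D_{ig}(z,z',d_1')=D_{ig}(z,z',d_2')$ for all $d_1',d_2'$; in particular, whenever $d_1'\le d_2'$ it satisfies both $D_{ig}(z,z',d_1')\le D_{ig}(z,z',d_2')$ and $D_{ig}(z,z',d_1')\ge D_{ig}(z,z',d_2')$, so it is at once supermodular and submodular. Hence $\mathcal{T}_{dominant}\subseteq\mathcal{T}_{supermodular}$ and $\mathcal{T}_{dominant}\subseteq\mathcal{T}_{submodular}$, while by construction $\mathcal{T}_{supermodular}\subseteq\mathcal{T}$ and $\mathcal{T}_{submodular}\subseteq\mathcal{T}$. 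Both subsets inherit Assumptions~\ref{AsExclu}--\ref{AsIndep} from $\mathcal{T}$ (these are properties of each individual type and of the assignment distribution), so $\Theta(\cdot)$ is well defined on them. Applying the monotonicity lemma along $\mathcal{T}_{dominant}\subseteq\mathcal{T}_{supermodular}\subseteq\mathcal{T}$ gives
\begin{equation*}
\Theta(\mathcal{T}_{dominant})\subseteq\Theta(\mathcal{T}_{supermodular})\subseteq\Theta(\mathcal{T}),
\end{equation*}
and the same chain with $\mathcal{T}_{submodular}$ in place of $\mathcal{T}_{supermodular}$. Theorem~\ref{thm:equivalence_dominance} supplies $\Theta(\mathcal{T})=\Theta(\mathcal{T}_{dominant})$, which collapses both chains to equalities and proves the Corollary.

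There is no genuine technical obstacle here — which is precisely why this is a corollary — only one bookkeeping point to be careful about: Theorem~\ref{thm:equivalence_dominance} is invoked \emph{only} for $\mathcal{T}$ itself, which is assumed to satisfy the closure Assumptions~\ref{assump:closed_deterministic} and \ref{assump:closure_dominant}; it is not re-applied to $\mathcal{T}_{supermodular}$ or $\mathcal{T}_{submodular}$, which need not be closed in either sense. All statements about those subsets follow from the purely set-theoretic monotonicity lemma, so no extra closure hypotheses are required beyond those already assumed in Theorem~\ref{thm:equivalence_dominance}.
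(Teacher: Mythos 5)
Your proposal is correct and is exactly the ``sandwich'' argument the paper itself invokes (the paper only sketches it in one sentence after the corollary): dominance implies both supermodularity and submodularity, so $\Theta(\mathcal{T}_{dominant})\subseteq\Theta(\mathcal{T}_{supermodular}),\Theta(\mathcal{T}_{submodular})\subseteq\Theta(\mathcal{T})$ by monotonicity of the identified set under support inclusion, and Theorem~\ref{thm:equivalence_dominance} collapses the chain. Your explicit note that the closure hypotheses are only needed for $\mathcal{T}$ itself, not for the intermediate sets, is a useful clarification that the paper leaves implicit.
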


	The counterfactuals estimands captured by $\Gamma(\mathcal{T})$ are very different than those in $\Theta(\mathcal{T})$ because the parameters explicitly depend on the best response functions. One might wonder whether dominance, supermodularity, and submodularity restrictions are relevant for $\Gamma(\mathcal{T})$. We show they are, by introducing a simple counter-example that satisfies Assumptions \ref{AsExclu}, \ref{AsNash},  \ref{AsIndep}, \ref{assump:closed_deterministic}, and \ref{assump:closure_dominant}, in which imposing assumptions on best-responses can strictly shrink the identified set.
	
	\begin{definition}[Counter-example]
		\label{definition:counterexample-dominant}Let $(z_1,z_2) = (0,0)$. We let $\mathcal{T}^{counter} = \{(s,s',e)\} \cup \{ (s,\tilde{s}',e)\}$. The pair-type $(s,s',e)$ satisfies the following properties.
		\begin{enumerate}
			\setlength\itemsep{-0.3em}
			\item (Dominance) $D_s(z_1,z_2,d') = 0$ for $d' \in \{0,1\}$.
			\item (Strict super-modularity partner) $D_{s'}(z_2,z_1,0) = 0$ and $D_{s'}(z_2,z_1,1) = 1$.
			\item (Dominance and full compliance for other offer pairs) \\
			$[D_s(z,z',d'),D_{s'}(z',z,d)] = (z,z')$ for all $(z,z') \ne (z_1,z_2)$ and all $(d,d') \in \{0,1\}^2$.
			\item (Potential outcomes) $Y_s(1,1) = Y_{s
			}(1,1) = 1$ and $Y_s(\tilde{d},\tilde{d}') = Y_{s'}(\tilde{d}',\tilde{d}) = 0$ for all $(\tilde{d},\tilde{d}) \in \{(0,0),(0,1),(1,0)\}$.
			\item (Equilibrium Selection) $e(z,z',d,d') = 1$ if and only if $(d,d') = (z,z')$.
		\end{enumerate}
		The type $(s,\tilde{s}',e)$ is identical to $(s,s',e)$, except that $D_{\tilde{s}'}(z_2,z_1,d) = 0$, for $d \in \{0,1\}$.
	\end{definition}
	In this example, we observe full compliance in the sense that take-up exactly matches the offers made to pair. Best-responses are always dominant for all offer pairs except for $(z_1,z_2) = (0,0)$. In this case, the first individual has a dominant strategy, but their partner has a ``mirror'' best response, which exactly matches their behavior. However, their ``mirror'' behavior only matters off-equilibrium. Pairs with these characteristics display unique, deterministic, Nash equilibria, and by inspection we can verify that $\mathcal{T}^{counter}$ satisfies  \ref{AsExclu}, \ref{AsNash},  and \ref{assump:closed_deterministic}. Assumption  \ref{AsIndep} is satisfied as long as offers are made at random. Because $\mathcal{T}^{counter}$ includes non-dominant best-responses, it also satisfies Assumption \ref{assump:closure_dominant}.
	
	\begin{theorem}[Counter-Example]
		\label{theorem:counterexample_dominance}
		Let $(z_1,z_2) = (0,0)$ and define a counterfactual estimand $\gamma(\mu) = \int Y_s(1,D_{s'}(z_2,z_1,1))d\mu(s,s')$. Suppose that the data is generated by the single type pair in $\mathcal{T}$. Let $\mathcal{T}^{counter}$ be the set in Definition \ref{definition:counterexample-dominant}. Then
		$$ \Gamma\left(\mathcal{T}^{counter}_{dominant} \right) = \Gamma\left(\mathcal{T}^{counter}_{submodular} \right) = \{0\}, $$
		$$ \Gamma\left(\mathcal{T}^{counter} \right) = \Gamma\left(\mathcal{T}^{counter}_{supermodular} \right) = [0,1]. $$
	\end{theorem}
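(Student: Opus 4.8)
The plan is to show that the two pair types that make up $\mathcal{T}^{counter}$ induce exactly the same observable distribution $p$, so that the feasibility constraint $A\mu=p$ imposes nothing beyond membership in $\Delta(\mathcal{T}^{counter})$, and then to evaluate the linear functional $\gamma$ at the two endpoints of that segment. I would proceed in four steps. \emph{Step 1 (the data).} Using Definition~\ref{definition:counterexample-dominant}: at every offer $(z,z')\neq(0,0)$ full compliance (properties 3 and 5) forces the unique Nash profile $(d,d')=(z,z')$, and property 4 gives outcomes $(y,y')=(\mathbbm{1}\{z=z'=1\},\mathbbm{1}\{z=z'=1\})$; at $(z,z')=(0,0)$, $D_s(0,0,\cdot)\equiv 0$ (property 1) forces $D_1=0$, and since $D_{s'}(0,0,0)=0$ (property 2) the unique equilibrium is $(0,0)$, with outcomes $(Y_s(0,0),Y_{s'}(0,0))=(0,0)$. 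Hence $p$ puts unit mass, conditional on each $(z,z')$, on a single cell $(y,y',d,d')$; by Step 2 below the same $p$ arises whichever of the two pair types is taken as the data-generating one, so the theorem's hypothesis pins down this $p$.

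\emph{Step 2 (observational equivalence --- the crux).} I would show $A\,\delta_{(s,\tilde s',e)}=p$ as well. The pair types $(s,s',e)$ and $(s,\tilde s',e)$ are identical except for the value of the partner's best response $D_{s'}(z_2,z_1,\cdot)$ at the argument ``partner plays $1$'' under the offer $(z_1,z_2)=(0,0)$. But at that offer individual $1$'s best response is the constant $0$, so the unique pure-strategy Nash profile is $(0,0)$ for \emph{both} types, and the disagreeing value $D_{s'}(0,0,1)$ vs.\ $D_{\tilde s'}(0,0,1)$ is never on the equilibrium path; at all other offers the two types coincide verbatim. Therefore both degenerate distributions, hence every convex combination $\mu_\lambda:=\lambda\,\delta_{(s,s',e)}+(1-\lambda)\,\delta_{(s,\tilde s',e)}$, satisfies $A\mu_\lambda=p$. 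Since $\Delta(\mathcal{T}^{counter})$ consists precisely of the $\mu_\lambda$, $\lambda\in[0,1]$, the identified set for the type distribution is all of $\Delta(\mathcal{T}^{counter})$.

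\emph{Step 3 (evaluate $\gamma$ and classify the restricted sets).} By linearity, $\gamma(\mu_\lambda)=\lambda\,\gamma(s,s',e)+(1-\lambda)\,\gamma(s,\tilde s',e)$. Property 2 gives $D_{s'}(0,0,1)=1$, so $\gamma(s,s',e)=Y_s(1,1)=1$ (property 4); the modification gives $D_{\tilde s'}(0,0,1)=0$, so $\gamma(s,\tilde s',e)=Y_s(1,0)=0$ (property 4). Hence $\gamma(\mu_\lambda)=\lambda$ sweeps $[0,1]$, so $\Gamma(\mathcal{T}^{counter})=[0,1]$. For the restrictions, note that the only non-constant best response appearing anywhere in $\mathcal{T}^{counter}$ is the ``mirror'' map $D_{s'}(0,0,\cdot)\colon 0\mapsto 0,\,1\mapsto 1$, which is (strictly) supermodular but neither submodular nor dominant, whereas every other best response is constant and hence simultaneously dominant, supermodular, and submodular. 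Thus $\mathcal{T}^{counter}_{dominant}=\mathcal{T}^{counter}_{submodular}=\{(s,\tilde s',e)\}$ and $\mathcal{T}^{counter}_{supermodular}=\mathcal{T}^{counter}$. On the singleton set the only feasible (by Step 2) distribution is $\delta_{(s,\tilde s',e)}$, giving $\Gamma(\mathcal{T}^{counter}_{dominant})=\Gamma(\mathcal{T}^{counter}_{submodular})=\{0\}$; and $\Gamma(\mathcal{T}^{counter}_{supermodular})=\Gamma(\mathcal{T}^{counter})=[0,1]$ by the first half of this step.

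The hard part is Step 2: one must check, offer by offer, that the pure-strategy Nash set is a singleton equal to the profile selected by $e$, so that the single coordinate on which $D_{s'}$ and $D_{\tilde s'}$ disagree is genuinely off-path and therefore cannot affect $p$. This is essentially bookkeeping with the best-response arguments (including the role swap in $D_{s'}(z',z,d)$ and the indexing of potential outcomes), but it is where all the content sits; once observational equivalence is established, the value computations for $\gamma$ and the classification of the restricted type spaces are immediate.
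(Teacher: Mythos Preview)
Your proposal is correct and follows essentially the same approach as the paper's own proof: establish that the two pair types in $\mathcal{T}^{counter}$ are observationally equivalent (so every $\mu\in\Delta(\mathcal{T}^{counter})$ is feasible), evaluate $\gamma$ at each extreme point to get $1$ and $0$, and then observe that supermodularity retains both types while submodularity and dominance retain only $(s,\tilde s',e)$. Your write-up is more explicit than the paper's---in particular, your Step~2 spells out why the single disagreeing coordinate $D_{s'}(0,0,1)$ is off the equilibrium path---but the underlying argument is identical.
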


	\subsection{Best responses: Symmetry}
	
	Type symmetry is an important restriction for discrete choice games with peer effects, previously used in \cite{Tamer2010} and \cite{kline2012bounds}, among others. To analyze best-response symmetry and its connection to best-response dominance, it is useful to represent the best response function as a vector, as follows:
	$$ \delta_s(z,z') = (D_s(z,z',0), D_s(z,z',1)).$$
	We say that a type pair is symmetric in best responses if they both members have the same response function.
	\begin{definition}[Best-Response Symmetry]
		\label{definition:bestresponse_symmetry}
		$\delta_s(z,z') = \delta_{s'}(z',z)$.
	\end{definition}
	
	Table \ref{tab:stategy_equilibria_table} illustrates the set of possible equilibria for a pair $(s,s')$ depending on the best-response of each member. It facilitates the process of visualizing which type pairs and which sets of equilibria are being restricted by imposing symmetry, submodularity, and dominance, respectively.
	
	\begin{table}[H]
		\centering

		\footnotesize
		\begin{subtable}{0.45\textwidth}
			\captionsetup{labelformat=empty}
			\begin{tabular}{|c|c|c|c|c|} 
				\hline 
				& \multicolumn{4}{c|}{$\delta_s = (D_s(z,z',0), D_s(z,z',1))$} \\
				\hline
				$\delta_{s'} $ & \textbf{0,0} & \textbf{0,1} & \textbf{1,0} & \textbf{1,1} \\
				\hline 
				\textbf{0,0} & \cellcolor{lightgreen}(0,0) & (0,0) & (0,1) & (0,1) \\
				\hline 
				\textbf{0,1} & (0,0) & \cellcolor{lightgreen}(0,0); (1,1) & $\varnothing$ & (1,1) \\
				\hline 
				\textbf{1,0} & (1,0) & $\varnothing$ & \cellcolor{lightgreen}(1,0); (0,1) & (0,1) \\
				\hline
				\textbf{1,1} & (1,0) & (1,1) & (1,0) & \cellcolor{lightgreen}(1,1) \\
				\hline
			\end{tabular}
			\caption{Panel A: Symmetry}
		\end{subtable}
		\hspace{2em}
		\begin{subtable}{0.45\textwidth}
			\begin{tabular}{|c|c|c|c|c|}
				\hline 
				& \multicolumn{4}{c|}{$\delta_s = (D_s(z,z',0), D_s(z,z',1))$} \\
				\hline
				$\delta_{s'} $ & \textbf{0,0} & \textbf{0,1} & \textbf{1,0} & \textbf{1,1} \\
				\hline 
				\textbf{0,0} & \cellcolor{lightred}(0,0) & (0,0) & \cellcolor{lightred}(0,1) & \cellcolor{lightred}(0,1) \\
				\hline 
				\textbf{0,1} & (0,0) & (0,0); (1,1) & $\varnothing$ & (1,1) \\
				\hline 
				\textbf{1,0} & \cellcolor{lightred}(1,0) & $\varnothing$ & \cellcolor{lightred}(1,0); (0,1) & \cellcolor{lightred}(0,1) \\
				\hline
				\textbf{1,1} & \cellcolor{lightred}(1,0) & (1,1) & \cellcolor{lightred}(1,0) & \cellcolor{lightred}(1,1) \\
				\hline
			\end{tabular}
			\caption{Panel B: Submodularity}
		\end{subtable}
		
		\vspace{10pt}
		
		\begin{subtable}{0.45\textwidth}
			\begin{tabular}{|c|c|c|c|c|}
				\hline 
				& \multicolumn{4}{c|}{$\delta_s = (D_s(z,z',0), D_s(z,z',1))$} \\
				\hline
				$\delta_{s'} $ & \textbf{0,0} & \textbf{0,1} & \textbf{1,0} & \textbf{1,1} \\
				\hline 
				\textbf{0,0} & \cellcolor{lightblue}(0,0) & \cellcolor{lightblue}(0,0) & (0,1) & \cellcolor{lightblue}(0,1) \\
				\hline 
				\textbf{0,1} & \cellcolor{lightblue}(0,0) & \cellcolor{lightblue}(0,0); (1,1) & $\varnothing$ & \cellcolor{lightblue}(1,1) \\
				\hline 
				\textbf{1,0} & (1,0) & $\varnothing$ & (1,0); (0,1) & (0,1) \\
				\hline
				\textbf{1,1} & \cellcolor{lightblue}(1,0) & \cellcolor{lightblue}(1,1) & (1,0) & \cellcolor{lightblue}(1,1) \\
				\hline
			\end{tabular}
			\caption{Panel C: Supermodularity}
		\end{subtable}
		\hspace{2em}
		\begin{subtable}{0.45\textwidth}
			\begin{tabular}{|c|c|c|c|c|}
				\hline 
				& \multicolumn{4}{c|}{$\delta_s = (D_s(z,z',0), D_s(z,z',1))$} \\
				\hline
				$\delta_{s'} $ & \textbf{0,0} & \textbf{0,1} & \textbf{1,0} & \textbf{1,1} \\
				\hline 
				\textbf{0,0} & \cellcolor{lightpurple}(0,0) & (0,0) & (0,1) & \cellcolor{lightpurple}(0,1) \\
				\hline 
				\textbf{0,1} & (0,0) & (0,0); (1,1) & $\varnothing$ & (1,1) \\
				\hline 
				\textbf{1,0} & (1,0) & $\varnothing$ & (1,0); (0,1) & (0,1) \\
				\hline
				\textbf{1,1} & \cellcolor{lightpurple}(1,0) & (1,1) & (1,0) & \cellcolor{lightpurple}(1,1) \\
				\hline
			\end{tabular}
			\caption{Panel D: Dominance}
		\end{subtable}
		
		\caption{Take-up Equilibria. The cells represent the set of possible Nash equilibria that are supported by each pair type, given a fixed set of offers, e.g. $(z,z') = (0,0)$. For example, the cell in the first column and second row of panel A, represents the set of Nash equilibria for a type pair $(s,s')$ with a dominance type [$D_s(\cdot,\cdot,0),D_{s}(\cdot,\cdot,1)] = [0,0]$ and a supermodular partner [$D_{s'}(\cdot,\cdot,0),D_{s'}(\cdot,\cdot,1)] = [0,1]$. Each row represents the take-up type of individual $\{ig\}$, whereas each column represents the take-up type of individual $\{jg\}$. In Panel A, the shaded cells impose the symmetry assumption in Definition \ref{definition:bestresponse_symmetry}. The shaded cells in panels B,C, D, show surviving cells after imposing the respective definitions of sub-modularity, super-modularity, and dominance for both members of the pair. Note that the pairs of best-responses and set of equilibria could vary with $(z,z')$, so in practice there could be up to four different versions of these tables corresponding to each offer pair.}
		\label{tab:stategy_equilibria_table}
	\end{table}

	Table \ref{tab:stategy_equilibria_table}, shows that symmetric types can generate any $(d,d') \in \{(0,0),(0,1),(1,0),(1,1)\}$, similar to the dominance assumption, though in practice the type pairs that achieve this may differ. Inspired by this finding, we propose a particular type of closure assumption, based on the ability to map any type pair to an equivalent symmetric pair that leads to the same take-up equilibria.
	
	\begin{assumption}[Closure over symmetric best responses]
		\label{assumption:closure_symmetric}
		If a pair $(s,s',e) \in \mathcal{T}$ satisfies $e(z,z',d,d') = 1$, then there exists $(\tilde{s},\tilde{s}',e) \in \mathcal{T}$, such that (i) $D_{\tilde{s}}(z,z',\tilde{d}) = D_{\tilde{s}'}(z',z,\tilde{d})$ for $\tilde{d} \in \{0,1\}$ and $(d,d')$ is a Nash equilibrium, (ii) best-responses are identical for other offer values, (iii) potential outcomes are identical. 
		
	\end{assumption}
	
	\begin{theorem}
		\label{thm:equivalence_symmetry}
		Let $\mathcal{T}$ be any set of pair types that satisfies Assumptions \ref{AsExclu}, \ref{AsNash},  \ref{AsIndep}, \ref{assump:closed_deterministic}, and \ref{assumption:closure_symmetric}. Let $\mathcal{T}_{symmetric} \subseteq \mathcal{T}$ be the subset of types where all individuals have symmetric strategies. Then
		$$ \Theta(\mathcal{T}) = \Theta(\mathcal{T}_{symmetric}). $$
	\end{theorem}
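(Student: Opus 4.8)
The plan is to prove the two inclusions separately. One direction, $\Theta(\mathcal{T}_{symmetric}) \subseteq \Theta(\mathcal{T})$, is immediate: since $\mathcal{T}_{symmetric} \subseteq \mathcal{T}$, every $\mu$ that is feasible for the program defining $\Theta(\mathcal{T}_{symmetric})$ is also feasible for the program defining $\Theta(\mathcal{T})$, with the same objective value. The work is in the reverse inclusion $\Theta(\mathcal{T}) \subseteq \Theta(\mathcal{T}_{symmetric})$.

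First I would reduce to deterministic types. Because $\mathcal{T}$ satisfies Assumptions \ref{AsExclu}, \ref{AsNash}, \ref{AsIndep}, and \ref{assump:closed_deterministic}, Theorem \ref{thm:equivalence_randomization} gives $\Theta(\mathcal{T}) = \Theta(\mathcal{T}_{det})$, where $\mathcal{T}_{det} \subseteq \mathcal{T}$ is the subset of types with deterministic selection rules. Note that $\mathcal{T}_{det}$ still satisfies Assumption \ref{assumption:closure_symmetric}: the symmetric type it produces carries the \emph{same} selection rule $e$, which is deterministic, so it lies in $\mathcal{T}_{det}$. Fix $\theta \in \Theta(\mathcal{T}_{det})$ and a feasible $\mu \in \Delta(\mathcal{T}_{det})$ with $A\mu = p$ and $\theta(\mu) = \theta$.

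The core construction is a map $\phi : \mathcal{T}_{det} \to \mathcal{T}_{det} \cap \mathcal{T}_{symmetric}$ built by applying Assumption \ref{assumption:closure_symmetric} once for each of the four offer values $(z,z') \in \{0,1\}^2$. Starting from $t^{(0)} = (s,s',e)$, at step $k$ (processing offer value $(z_k,z_k')$) the deterministic rule selects a unique $(d_k,d_k')$ with $e(z_k,z_k',d_k,d_k') = 1$, so the closure yields $t^{(k)} = (s^{(k)}, s'^{(k)}, e) \in \mathcal{T}_{det}$ such that: (i) at $(z_k,z_k')$ the best responses are symmetric and $(d_k,d_k')$ is a Nash equilibrium --- in fact the \emph{unique} one, since $t^{(k)} \in \mathcal{T}$ must satisfy the Nash-consistency in Assumption \ref{AsNash} and $e$ is deterministic; (ii) best responses are unchanged at the other offer values, so the symmetry already secured at $(z_1,z_1'),\dots,(z_{k-1},z_{k-1}')$ is preserved; (iii) potential outcomes are unchanged. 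After the four steps, $\phi(t) := t^{(4)}$ is symmetric at all offer values, carries the same rule $e$, and has the same outcome types $Y_{s^{(4)}} = Y_s$ and $Y_{s'^{(4)}} = Y_{s'}$. I would then observe that the column of $A$ indexed by a type and the $\theta$-coefficient of that type depend only on the pair $(e, (Y_s, Y_{s'}))$: indeed $p(y,y',d,d'\mid z,z',s,s',e) = \mathbbm{1}\{e(z,z') = (d,d')\}\,\mathbbm{1}\{(Y_s(d,d'), Y_{s'}(d',d)) = (y,y')\}$, and $\theta(\mu)$ integrates $Y_s(d_1,d_1') - Y_s(d_2,d_2')$. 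Hence $\phi(t)$ has exactly the same $A$-column and the same $\theta$-coefficient as $t$. Therefore the pushforward $\mu' := \phi_{*}\mu$ (adding masses where $\phi$ identifies types) is supported on $\mathcal{T}_{symmetric}$, satisfies $A\mu' = A\mu = p$, and $\theta(\mu') = \theta(\mu) = \theta$, so $\theta \in \Theta(\mathcal{T}_{symmetric})$.

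The main obstacle I anticipate is the bookkeeping in the iterated use of the closure: one must verify that symmetrizing at a new offer value does not disturb symmetry already obtained at previously processed offers (this is exactly clause (ii) of Assumption \ref{assumption:closure_symmetric}), that every intermediate type stays in $\mathcal{T}_{det}$ and still satisfies Assumption \ref{AsNash} so the closure can be reapplied, and that the uniqueness of the selected Nash equilibrium at each offer is genuinely forced by Nash-consistency of the resulting type rather than merely assumed. A secondary point is checking that $\mathcal{T}_{det}$ inherits Assumption \ref{assumption:closure_symmetric} so that Theorem \ref{thm:equivalence_randomization} and the symmetrization can be chained. Note that the same argument would give the analogue for $\Gamma(\mathcal{T})$ \emph{if} the closure also preserved best responses off the realized path, but it need not, which is consistent with symmetry carrying identifying power for policy-targeting estimands as in the earlier counter-example.
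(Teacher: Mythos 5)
Your proposal is correct and follows essentially the same route as the paper's proof: reduce to deterministic selection rules via Theorem \ref{thm:equivalence_randomization}, then recursively apply Assumption \ref{assumption:closure_symmetric} one offer pair at a time, reallocating mass from each asymmetric type to a symmetric counterpart with the identical $A$-column and potential outcomes. Your extra bookkeeping (that $\mathcal{T}_{det}$ inherits the closure, that symmetrization at one offer does not disturb earlier offers, and the remark on why the argument fails for $\Gamma$) is a more explicit rendering of steps the paper leaves implicit, not a different argument.
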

	
	Theorem \ref{thm:equivalence_symmetry} shows that best response symmetry does not reduce the identified set, when the type space is closed over symmetric best responses. Since the superset of all types $\mathcal{T}^*$ that satisfies the core assumptions \ref{AsExclu}, \ref{AsNash},  \ref{AsIndep} satisfies this closure condition, then we can conclude that symmetry on its own does not reduce the identified set for average treatment effect type parameters $\theta$.

	\begin{remark}[Relevance for policy targeting] The equivalence in Theorem \ref{thm:equivalence_dominance} does not extend to policy targeting parameters. Notice that the set $\mathcal{T}^{counter}$ in Definition \ref{definition:counterexample-dominant} satisfies closure over symmetric best responses and all the other assumptions of \ref{thm:equivalence_dominance}, since it is comprised of two types, one asymmetric types, and another symmetric type (which also happens to be a dominant pair type). By the same arguments as in \ref{theorem:counterexample_dominance},   $\Gamma\left(\mathcal{T}^{counter}_{symmetric}\right) = \{0\}$, and $ \Gamma\left(\mathcal{T}^{counter} \right) = \ [0,1]$.
		
	\end{remark}
	
	\begin{remark}[Dominance and symmetry combined]
		
		While the form of Theorem \ref{thm:equivalence_symmetry} shares a similar flavor to that of Theorem \ref{thm:equivalence_dominance}, it is important to note that they both differ in the type of closure condition that is imposed on $\mathcal{T}$.  Importantly, some meaningful type sets $\mathcal{T}$ may not satisfy closure under dominance and closure under symmetry simultaneously. Moreover, as Table \ref{tab:stategy_equilibria_table} illustrates, imposing symmetry and dominance simultaneously, rules out the take-up $(d,d') \in \{(0,1),(1,0)\}$. We formalize these results by showing the combination of assumptions can lead to an empty set, when these particular combinations of $(d,d')$ are observed with some probability in the data.
		
		\begin{theorem}
			\label{thm:emptyset_symmetry_plus_dominance}
			Let $\mathcal{T}$ be any set that satisfies the assumptions \ref{AsExclu}, \ref{AsNash}, \ref{AsIndep}, \ref{assump:closed_deterministic}, and \ref{assump:closure_dominant}. Define $\mathcal{T}_{dominant}$ and $\mathcal{T}_{symmetric}$ as the corresponding susbets of types with either only dominant strategies or only symmetric best responses, respectively. Suppose that $\sum_{z,z'} \sum_{(d,d') \in \{(0,1),(1,0)\}} \mathbb{P}((D_{ig},D_{jg})=(d,d') \mid Z_{ig},Z_{jg} = (z,z') ) > 0$. Then 
			$$ \Theta\left(\mathcal{T}_{dominant} \cap \mathcal{T}_{symmetric}\right)= \emptyset.$$
			$$ \Gamma\left(\mathcal{T}_{dominant} \cap \mathcal{T}_{symmetric}\right)= \emptyset.$$
		\end{theorem}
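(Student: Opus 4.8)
The plan is to show that the constraint $A\mu = p$ is infeasible for every $\mu$ supported on $\mathcal{T}_{dominant}\cap\mathcal{T}_{symmetric}$, because pair types that are simultaneously dominant and symmetric can only ever generate the ``diagonal'' take-up profiles $(0,0)$ and $(1,1)$, whereas the hypothesis forces the observed distribution to place strictly positive mass on $(0,1)$ or $(1,0)$ for some offer value.

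First I would characterize $\text{Nash}(s,s',z,z')$ for a pair that is both dominant and symmetric. Fix $(z,z')$. Dominance for $\{ig\}$ means $D_s(z,z',0) = D_s(z,z',1)$; call this common value $a := D_s(z,z')$, and likewise write $a' := D_{s'}(z',z)$ for the partner. Best-response symmetry (Definition \ref{definition:bestresponse_symmetry}), $\delta_s(z,z') = \delta_{s'}(z',z)$, then forces $a = a'$. Substituting into the definition of the Nash set, $(d,d')$ is an equilibrium iff $d = D_s(z,z',d') = a$ and $d' = D_{s'}(z',z,d) = a' = a$, so $\text{Nash}(s,s',z,z') = \{(a,a)\}$ with $a \in \{0,1\}$. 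In particular, for every such pair and every $(z,z')$, neither $(0,1)$ nor $(1,0)$ is a Nash equilibrium.

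Next I would push this through the mapping to observables. By Assumption \ref{AsNash}, any consistent $e$ places all its mass on $\text{Nash}(s,s',z,z')$, so for every $t=(s,s',e)\in\mathcal{T}_{dominant}\cap\mathcal{T}_{symmetric}$ the type-level probability $p(y,y',d,d'\mid z,z',t) = e(z,z',d,d')\,\mathbbm{1}\{(Y_s(d,d'),Y_{s'}(d',d)) = (y,y')\}$ vanishes whenever $(d,d')\in\{(0,1),(1,0)\}$, for all $(z,z')$ and $(y,y')$. Integrating against any $\mu\in\Delta(\mathcal{T}_{dominant}\cap\mathcal{T}_{symmetric})$ gives $p(y,y',d,d'\mid z,z') = 0$ for all $(y,y')$, all $(z,z')$, and all off-diagonal $(d,d')$; summing over $(y,y')$ yields $\mathbb{P}((D_{ig},D_{jg})=(d,d')\mid (Z_{ig},Z_{jg})=(z,z')) = 0$ for those $(d,d')$. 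The hypothesis $\sum_{z,z'}\sum_{(d,d')\in\{(0,1),(1,0)\}}\mathbb{P}((D_{ig},D_{jg})=(d,d')\mid (Z_{ig},Z_{jg})=(z,z')) > 0$ then produces a contradiction on that coordinate of $A\mu = p$, so the feasible set in Definition \ref{definition:sharp_identified_set} is empty and $\Theta(\mathcal{T}_{dominant}\cap\mathcal{T}_{symmetric}) = \emptyset$. The same feasible set underlies $\Gamma$, so $\Gamma(\mathcal{T}_{dominant}\cap\mathcal{T}_{symmetric}) = \emptyset$ as well.

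I do not anticipate a genuine obstacle. The only step needing care is the first one: correctly translating ``dominant and symmetric'' into $D_s(z,z') = D_{s'}(z',z)$ and hence into diagonal-only equilibria — this is exactly the content visually summarized by comparing Panels A and D of Table \ref{tab:stategy_equilibria_table}. The closure conditions \ref{assump:closed_deterministic} and \ref{assump:closure_dominant} are not actually invoked in the argument; they are listed only so that the ambient sets coincide with the well-defined objects appearing in Theorems \ref{thm:equivalence_dominance} and \ref{thm:equivalence_symmetry}.
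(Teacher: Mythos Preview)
Your proposal is correct and follows exactly the same approach as the paper: the paper's proof is a one-line reference to Table \ref{tab:stategy_equilibria_table} (``if we impose both symmetry and dominance restrictions, it is not possible to see $(D_{ig},D_{jg}) \in \{(0,1),(1,0)\}$''), and your argument simply spells out rigorously why that table entry holds and carries the contradiction through the feasibility constraint. Your remark that Assumptions \ref{assump:closed_deterministic} and \ref{assump:closure_dominant} are not actually invoked is also accurate.
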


		Theorem \ref{thm:emptyset_symmetry_plus_dominance}, in light of previous results, suggests that while symmetry and dominance may be observationally equivalent ways to model take-up behavior under certain closure restrictions on types, it may not be feasible to impose both simultaneously to satisfy the observed restrictions.
		
	\end{remark}
	
	\begin{remark}[Restriction to single-equilibrium types]
		It may be tempting to restrict attention to only pair types with where $(s,s')$ leads to unique take-up equilibria. However, as we can see from Table \ref{tab:stategy_equilibria_table}, symmetric types can only generate single equilibria with $(d,d') \in \{(0,0),(1,1)\}$. For similar reasons as described above, combine symmetry with single-equilibrium types may be falsified by the data.
        \end{remark}
        
    Theorems \ref{thm:equivalence_dominance}, \ref{thm:equivalence_symmetry}, and \ref{thm:emptyset_symmetry_plus_dominance} convey a similar message to the bounds derived for the joint Best Response Function (BRF) of two players in \citet{kline2012bounds}. However, their framework does not include instruments or outcomes, and the joint BRF, rather than policy parameters, is the primary object of interest. We elaborate on this connection in Appendix \ref{app:Kline_Tamer_2012}.


\section{A Menu of Latent Space Restrictions}
\label{Srestrictions_takeup}

In the literature, point or partial identification of different parameters of interest is often obtained under additional assumptions each of which imposes restrictions on the latent types. Through our representation of the latent types, it is easy to see that restrictions are imposed via either equality or inequality constraints, on the space of types. Equality constraints typically rule out latent types altogether. This is the case for the \citet{Imbens1994} monotonicity assumption, where the probability mass on defier-types is equal to 0. Inequality constraints impose a bound on the total probability of certain types. This is the case for various forms of stochastic monotonicity, probabilities of strategic types and probability of assortative matching. In this paper, we collect a large menu of restrictions each of which may be motivated by economic theory, by knowledge of the institutional context, or by the information structure that the individuals face. 

Because here a latent type completely characterizes the complete contingent plan of a pair of individuals (i.e. both the best response function for all induced games as well as all potential outcomes), some restrictions may be enforced separately on each individual of the pair. This may be useful when the individuals in the pair are not anonymous or when they belong to potentially different populations. For example, instrument monotonicity may be imposed separately for each individual in the pair. Other assumptions about strategic interaction may not directly restrict any distribution of the marginal types, but rule out certain types occurring jointly. For example strategic complementarity, and even the weaker Nash equilibrium in pure strategy assumptions rule out certain joint strategic responses. 

Below we collect (in the notation of our setting) a (non-exhaustive) list of the assumptions that appear in the literature and can be motivated by the institutional context at hand. 

\begin{enumerate}[label=(\alph*)]
	\item \textbf{Monotonicity without spillovers:} \\ \citet{Imbens1994} 
	\begin{align*}
		D_{ig}(0,z,d_{-i}) \leq D_{ig}(1,z,d_{-i}) \ \forall (z,d_{-i}) \in \{0,1\}^2
	\end{align*}
	Under this assumption the LATE is point identified while ATE is not identified.
	\item \textbf{Vazquez-Bare Monotonicity:} \\ \citet{Vazquez-Bare2020} considers
	\begin{align*}
		D_{ig}(0,0,d_{-i}) \leq D_{ig}(0,1,d_{-i}) \leq D_{ig}(1,0,d_{-i}) \leq D_{ig}(1,1,d_{-i}) 
	\end{align*}
	$\forall d_{-i} \in \{0,1\}$. Under one-sided noncompliance, the average direct effect on compliers, and the average spillover effect on units with compliant peers are identified.
	
	\item \textbf{$\epsilon$-deviations from Monotonicity with 
		spillovers:} \\
	Define the set of latent pair types to be $M = \{ (s,s') | D_{s"g}(0,0,d_{-i}) \leq D_{s"g}(0,1,d_{-i}) \leq D_{s"g}(1,0,d_{-i}) \leq D_{s"g}(1,1,d_{-i}) \ \textrm{for} \ s" = s,s'\}$. Then $P(M^C) \leq \epsilon$.

	This says that the set of types that violate the \citet{Vazquez-Bare2020} monotonicity cannot be charged more than $\epsilon$ probability. It is related, albeit with a different goal, to the CD condition in deChaisemartin(2017). For $\epsilon=1$ is no restriction, $\epsilon=0$ coincides with \citet{Vazquez-Bare2017}.
	
	\item \textbf{IOR:} \\ \citet{DITRAGLIA20231589} consider
	$D_{ig}(z,z',d_{-i})= D_{ig}(z)$ for all $(z',d_{-i}) \in \{0,1 \}^2$.
	
	\item \textbf{One-sided non-compliance:} \\ \cite{kormos2023} assume $D_{ig}(z,z',d')=D_{ig}(z,z')$ and $D_{1g}(0,z')=D_{2g}(z,0)=0$ for all $(z,z')\in \{0,1 \}^2$. 
	\item \textbf{Strategic substitution/complementarity:}\\ \citet{HanBalat2023} consider,
	for all $(z,z') \in \{0,1\}^2,  D_{ig}(z,z',d')$ is weakly decreasing (weakly increasing) in each of the entries of $d_{-i}$, and identify a collection of ATEs.
	\item \textbf{$\epsilon$-deviation from strategic neutrality:}\\
	$\sum_{i=1,2,z_1,z_2} \mathbb{P}(D_{ig}(z_1,z_2,d_{-i}) \neq  D_{ig}(z_1,z_2,1-d_{-i})) \leq \epsilon$
	\item \textbf{Monotone treatment response:} \\ \citet{Manski1997} considers, coordinate-wise  $Y_{ig}(d_1,d_2) \geq Y_{ig}(d_1',d_2')$ for any $(d_1,d_2)^T \geq (d_1',d_2')^T$.

	\item \textbf{Monotone treatment selection:} \\ \citet{Manski1997, manski2009more} consider 
	\begin{align*}
		\mathbb{P}\left[\left. Y\left(d_i,d_j\right) =1 \right\vert D = 1 \right] \geq \mathbb{P}\left[\left. Y\left(d_i,d_j\right) =1 \right\vert D=0\right].
	\end{align*}
	\item \textbf{ Stochastic Dominance across Principal Strata} \\
	\citet{Imai2008,Blanco2013,Flores2013,Huber2017,Chen2019,Bartalotti2021} asumme that compliers are a selected sample in that their potential outcomes are uniformly higher. This is
	\begin{align*}
		\mathbb{P}\left[\left. Y\left(d_i,d_j\right) =1 \right\vert D\left(0,0\right) = 1, D\left(1,0\right) = 0 \right] \leq \mathbb{P}\left[\left. Y\left(d_i,d_j\right) =1 \right\vert D\left(0,0\right) = 1, D\left(1,0\right) = 1 \right]
	\end{align*}

	\item \textbf{$\epsilon$-deviation from outcome assortative-matching:} \\ $\sum_{d_1,d_2} \mathbb{P}(Y_{1g}(d_1,d_2) \neq  Y_{2g}(d_1,d_2)) \leq \epsilon$
	
	It says that the potential outcomes of the individuals that are observed to be paired together can be different with probability $\epsilon$. Setting $\epsilon=0$ enforces perfect assortative matching while $\epsilon=1$ imposes essentially no restrictions. 
	
	\item \textbf{$\epsilon$-deviation from treatment assortative-matching:} \\ $\sum_{z_1,z_2,d} \mathbb{P}(D_{1g}(z_1,z_2,d) \neq  D_{2g}(z_1,z_2,d)) \leq \epsilon$
	
	It says that the potential treatment choices of the individuals that are observed to be paired together can be different with probability $\epsilon$. Setting $\epsilon=0$ enforces perfect assortative matching while $\epsilon=1$ imposes essentially no restrictions. In the appendix, we explain how this condition is related to \citet{kline2012bounds}.
	
\end{enumerate}


\section{Empirical Illustrations}\label{Sempirical}

\cite{dupas2019effect} studied the impact of expanding access to bank accounts on several outcomes in rural Kenya. To this end, individuals who were randomly selected for the treatment received a  non-transferable voucher for a free savings account. One of the treatment arms focused on dual headed households, and since the randomization was done at the individual level, this is precisely the setting studied in this paper. While \cite{dupas2019effect} study a variety of outcomes, we focus on the effect that having a bank account has on owning a business as the main source of income.

This causal question suffers from the issues described in our paper since potential social interaction occurs between spouses in a non-exogenous way. This interaction, is potentially affecting choices of income sources as well as savings choices. Additionally, exogenous variation in the cost of opening bank accounts of an individual and his partner can affect the individual choice of savings if the household takes joint choices or shares the same budget.  In this context, hoping to point identify average effects seems complex since it would require significant restrictions on the non-exogenous behavior of households. We thus in-turn, provide a partial identification alternative with less restrictive assumptions on the data generating process. 

\begin{table}
	\centering
	\begin{tabular}{lcccccc} \hline
		& (1) & (2) & (3) & (4) & (5) & (6) \\
		&  Own & Own &  Receive & Receive & Send & Send  \\  
		&  Enterprise & Enterprise & Children + & Friends + &  Children + & Friends + \\  
		&  & & Siblings  & Parents & Siblings &  Parents \\  
		&  (Men ) &  (Women) &  (All) & (All) & (All) & (All) \\  \hline
		&  &  &  &  &  &  \\
		$Z_i$ & 0.109** & -0.000 & -0.115*** & 0.032 & 0.006 & 0.073 \\
		& (0.048) & (0.049) & (0.044) & (0.046) & (0.038) & (0.045) \\
		$Z_j$ & 0.053 & -0.008 & -0.062 & 0.002 & 0.017 & 0.051 \\
		& (0.046) & (0.049) & (0.044) & (0.047) & (0.037) & (0.045) \\
		$Z_i \times Z_j$ & -0.105 & -0.021 & 0.091 & -0.002 & 0.000 & -0.053 \\
		& (0.064) & (0.064) & (0.058) & (0.060) & (0.051) & (0.061) \\
		Constant & 0.207*** & 0.290*** & 0.710*** & 0.379*** & 0.310*** & 0.400*** \\
		& (0.032) & (0.037) & (0.033) & (0.036) & (0.026) & (0.033) \\
		&  &  &  &  &  &  \\
		Observations & 1,679 & 1,679 & 1,679 & 1,679 & 1,679 & 1,679 \\
		R-squared & 0.006 & 0.001 & 0.006 & 0.001 & 0.000 & 0.003 \\ \hline
		\multicolumn{7}{c}{ } \\
		\multicolumn{7}{c}{ *** p$<$0.01, ** p$<$0.05, * p$<$0.1} \\
	\end{tabular}
	\caption{Intention-to-treat effects for \cite{dupas2019effect}, for the subsample of married households and the data from rounds 3-6. The first two columns display the regression coefficients for binary outcomes $Y_i$ and $Y_j$, respectively, for whether an individual received income from their own enterprise. Columns 4-6 display the coefficients for binary outcomes related to the presence (or absence) of transfers with either (i) children and siblings, and (ii) friends, neighbors, and parents).
		Clustered standard errors in parentheses, clustered at the household level. }
	\label{tab:itt_regression_results}	
\end{table}

In terms of the notation of our paper, $Z_{men}$, $Z_{women}$ are, respectively, the indicators for the male and female spouse receiving the voucher. Similarly, $D_{men}$, $D_{women}$ are, respectively, the indicators for the male and female spouse opening a bank account. Finally $Y_{men},Y_{women}$ range over a small collection of outcomes of interest. They include indicators for male and female spouse owning a business, but also indicators for whether the spouses send or receive financial transfers from friends or next of kin. While presenting each set of bounds we specify which outcome of interest we are considering.

The complicated pattern of non-compliance that can arise in this empirical setting calls for an ITT analysis. While we go beyond by offering an analysis focused on average treatment effects, we consider two ITT style regressions:
\begin{align*}
	Y_{ig} = \beta_0 + \beta_1Z_{ig} + \beta_2Z_{jg} + \beta_3 Z_{ig}\times Z_{jg} + \varepsilon_{ig},
\end{align*}
for the subsample of men and women separately. The results can be seen in Table \ref{tab:itt_regression_results}.

Note that in both cases the coefficient on the spouses' instrument, is not significant. Moreover, for the case of women, neither coefficient is significantly different from 0. For the case of men, the interaction term coefficient is negative and statistically significant. When both are assigned to treatment, this reduces the likelihood of men owning a business.

Now we go beyond the ITT analysis, to try to bound the effects of the treatment on the outcome. We focus on four estimands, distinguishing between men and women. Let $\text{women}_i$ be equal to 1 if the individual is female, and 0 otherwise. Let $1$ denote males while $2$ denotes females. The first two estimands are
\begin{align*}
	\text{ADE}_{0, \text{men}} &:= \mathbb E[Y_{1g} (1,0) - Y_{1g} (0,0)],\\
	\text{ASE}_{0, \text{men}} &:= \mathbb E[Y_{1g} (0,1) - Y_{1g} (0,0)],
\end{align*}

which are average direct and spillover effects for the subpopulations of men. The other two estimands are the analogous for the subpopulation of women:
\begin{align*}
	\text{ADE}_{0, \text{women}} &:= \mathbb E[Y_{2g} (1,0) - Y_{2g} (0,0)],\\
	\text{ASE}_{0, \text{women}} &:= \mathbb E[Y_{2g} (0,1) - Y_{2g} (0,0)].
\end{align*}

First, we compute the bounds under three types of latent restrictions we previously discussed: strategic neutrality, strategic neutrality and IOR, strategic neutrality and Vazquez-Bare monotonicity (VB).

\begin{table}
	\centering
    {\small
	\begin{tabular}{llll}
		\toprule
		\textbf{Estimand} & \textbf{IOR+Neutral} & \textbf{VB + Neutral} & \textbf{Neutral} \\
		\midrule
		$ADE_{0,M}$ & $\varnothing$ & $\varnothing$ & [0.026, 0.352] \\
		$ASE_{0,M}$ & $\varnothing$ & $\varnothing$ & [-0.022, 0.325] \\
		$ADE_{0,W}$ & $\varnothing$ & $\varnothing$ & [-0.081, 0.266] \\
		$ASE_{0,W}$ & $\varnothing$ & $\varnothing$ & [-0.103, 0.223] \\\bottomrule
	\end{tabular}
    }
	\caption{Treatment Effect Bounds under Strategic Neutrality}
	\label{tab:table_dupas_bounds_1}
\end{table} 

\begin{table}
	\centering
    {\small
	\begin{tabular}{cc|cccc}
		\toprule
		$Z_{\text{men}}$ & $Z_{\text{women}}$ & $Y_{\text{men}}$ & $D_{\text{men}}$ & $Y_{\text{women}}$ & $D_{\text{women}}$\\
		\midrule
		0 & 0 & 0.18 & 0 & 0.26 & 0 \\
		0 & 1 & 0.23 & 0 & 0.31 & \cellcolor{lightred} 0.65 \\
		1 & 0 & 0.32 & 0.69 & 0.28& 0 \\
		1 & 1 & 0.27 & 0.70 & 0.31 &\cellcolor{lightred} 0.64 \\
		\bottomrule
	\end{tabular}
    }
	\caption{Violation of VB Monotonicity}
	\label{tab:table_vb_monotonicity}
\end{table}

The results are in Table \ref{tab:table_dupas_bounds_1}. Notably, both IOR and VB restrictions result in empty bounds. A possible explanation for this is that monotonicity is violated because the probability of opening an account for the female partner is weakly lower when both individuals receive the offer, $0.65$ vs. $0.64$. This can be seen in Table \ref{tab:table_vb_monotonicity}. Perhaps this may be due to the husband receiving slight preference for opening the account.

The bounds are non-empty if we allow a small number of non-monotonic types (2\%). As it can be seen in the third column of Table \ref{tab:dupas_VB_percentage}, further imposing a higher degree of monotonicity does not tighten the bounds for aggregate affects like the ADE and ASE.

\footnotesize
\begin{table}
	\centering
    {\small
	\begin{tabular}{lcccc}
		\toprule 
		\textbf{Estimand} &  \textbf{Neu + 100\% VB} & \textbf{Neu + 98\% VB} & \textbf{Neu + 50\% VB} \\
		\midrule 
		$ADE_{0,M}$ & $\varnothing$ &  [0.026, 0.352] & [0.026, 0.352] \\
		$ASE_{0,M}$ & $\varnothing$ &  [-0.022, 0.325] & [-0.022, 0.325] \\
		$ADE_{0,W}$ & $\varnothing$ &  [-0.081, 0.266] & [-0.081, 0.266] \\
		$ADE_{0,W}$ & $\varnothing$ &  [-0.103, 0.223] & [-0.103, 0.223] \\
		\bottomrule
	\end{tabular}
    }
	\caption{Treatment Effect Bounds under Strategic Neutrality and varying VB.}
	\label{tab:dupas_VB_percentage}
\end{table}

\normalsize

For completeness, we note that an interesting feature of the setting in \cite{dupas2019effect} is that the vouchers were also offered to single headed households, consisting of women. Thus, we have a no spillovers setting, akin to the one studied in the seminal paper by \cite{balke1997bounds}. Applying their procedure, the no-assumption bounds on the ATE are $[-0.078,0.277]$. The range of values for the ATE are mostly positive, and is somewhat consistent with the ADE for women with a high degree of VB monotonicity. However, we note that there are possible confounders that simultaneously affect the outcome and the composition of the household.


\section{Conclusion}\label{Sconclu}
We propose a linear programming formulation to obtain bounds on a variety of treatment effect parameters in the presence of spillovers and strategic interactions. The possibility of strategic interactions results in an incomplete model in the sense that the map from latent probabilities to observed probabilities is not unique. We propose a linear programming formulation to obtain bounds on different average treatment effect parameters. Different restrictions that the literature has commonly used can be easily incorporated into the linear programming as a form or restricting the support of latent probabilities. In this way, their joint or separate identifying power can be gauged against their impact on the width of the identified set. Future work will consider inference, Monte Carlo simulations, and an empirical application.


\singlespace

\renewcommand{\refname}{References}

\bibliographystyle{apalike}

\bibliography{references_all}

\appendix

	\section{Proofs}
	
	\begin{proof}[Proof of Theorem \ref{thm:equivalence_randomization}]
	\ 
	By construction, $\Theta(\mathcal{T}) \subseteq \Theta(\mathcal{T}^*)$, because distributions $\mu$ that are only supported on $\mathcal{T}$ are always feasible given $\mathcal{T}^*$.
	
	We focus on showing that $\Theta(\mathcal{T}^*) \subseteq \Theta(\mathcal{T})$. 
	Let $\theta \in \Theta(\mathcal{T}^*)$ and let $\mu$ be the distribution for which $\theta(\mu) = \theta$. Let $q_{ss'}$ denote the conditional probability mass of $(s,s')$ given $\mu$, integrating over all corresponding $e$. If $q_{ss'} = 0$, then it is irrelevant whether the rule is stochastic, because that set of types does not receive any mass. If $q_{ss'} > 0$, then let $\mu(e\mid s,s')$ denote the conditional mass over types. For each $(z,z')$, define $\lambda(s,s',z,z',d,d') = \int e(z,z',d,d')\text{d}\mu(e\mid s,s')$ as the aggregate selection probability.  Our goal will be to find a feasible updated distribution $\widetilde{\mu}$ that puts all this mass on types with deterministic equilibria.
	
	Start with $(z,z') = (0,0)$. If $\lambda(s,s',z,z',d,d') > 0$ it means that $e(z,z',d,d') > 0 $ for at least one $(s,s',e) \in \mathcal{T}$. By Assumption \ref{assump:closed_deterministic} there is a pair $(s,s',e^*)$ such that $e^*(z,z',d,d') = 1$, all else fixed for $(\tilde{z},\tilde{z}') \ne (z,z')$. We can repeat this argument to find deterministic pairs for all $(d,d')$ supported by $\lambda(s,s',z,z',d,d')$. Define an updated $\widetilde{\mu}$ that puts corresponding mass $\lambda(\cdot)$ on the types in this set. This doesn't change the mass on $q_{ss'}$ and hence $\theta(\mu) = \theta(\tilde{\mu})$. Similarly, our construction guarantees that $\int p(y,y',d,d'\mid z,z',t) \text{d}\mu(t \mid s,s')  = \int p(y,y',d,d'\mid z,z',t) \text{d}\tilde{\mu}(t \mid s,s')$. This means that the new measure satisfies the observed constraints by the law of iterated expectations.
	
	Here it is crucial that we only changed types based on their behavior given $(z,z') = (0,0)$. The behavior of types supported by $\tilde{\mu}$ for other instrument offers could still be stochastic. In the final step, we proceed recursively through $(z,z') \in \{(0,1),(1,0),(1,1)\}$, updating $\tilde{\mu}$ each time, and by the end the measure is only supported on deterministic pairs.
	
	The proof with $\Gamma(\mathcal{T})$ is analogous. The construction of $q_{ss'}$ and the reallocation of the measure within $(s,s',e)$ types is the same. Reallocations do not change the target parameter because $\gamma(\mu)$ only depends on $(s,s')$ outcomes and best responses, and not the equilibrium selection rule of the pair.
	
\end{proof}

\begin{proof}[Proof of Theorem \ref{thm:equivalence_dominance}] \
	By construction, $\Theta(\mathcal{T}_{dominant}) \subseteq \Theta(\mathcal{T})$, since $\mathcal{T}$ is closed under dominant strategies.
	
	We now prove that $\Theta(\mathcal{T}) \subseteq \Theta(\mathcal{T}_{dominant})$.  Let $\theta \in \Theta(\mathcal{T})$ and let $\mu$ be a measure such that $\theta = \theta(\mu)$. By Theorem \ref{thm:equivalence_randomization}, we can assume (without loss of generality) that $\mu$ is only supported over types with deterministic equilibrium selection rules.  This means that $\mu$ is completely characterized by a probability vector $q$ with individual entries $q_{ss'e}$. We start by focusing on a specific $t = (s,s',e)$ and offer pair $(z,z') = (0,0)$. By Assumption \ref{assump:closure_dominant} we can find an alternative pair $\tilde{t} = (\tilde{s},\tilde{s}',e)$ where the best response functions given $(z,z')$ are dominant, and everything else is equal. By definition $p(y,y',d,d' \mid z,z',t) = p(y,y',d,d' \mid z,z', \tilde{t})$. Our goal is to define an updated measure $\tilde{q}$ such that
	$$ \tilde{q}_{ss'e} = 0.$$
	$$ \tilde{q}_{\tilde{s}\tilde{s}'e} = q_{\tilde{s}\tilde{s}'e} + q_{ss'e}.$$
	Since $t$ and $\tilde{t}$ have the same function mapping types to observables, then this rearrangement still satisfies the constraint. Similarly since $t$ and $\tilde{t}$ generate the same potential outcomes by Assumption \ref{assump:closure_dominant}.(iii), then it is also possible to generate $\theta$ with $\tilde{q}$. Crucially, this only changes the measure over types based on the behavior for $(z,z') = 0$. To complete the proof we recursively proceed through $(z,z') \in \{(0,1),(1,0),(1,1)\}$ and by the end we have a new measure that is only supported over types with dominant strategies. 
\end{proof}

	\begin{proof}[Proof of Theorem 
	\ref{theorem:counterexample_dominance}]
	
	The two types contained in $\overline{\mathcal{T}}^{counter}$ both lead to the same take-up decisions and outcomes for every observed $(z_1,z_2)$ pair, so any measure $\mu$ will justify the observed, full compliance equilibria. Similarly, by construction, all pair types satisfy $Y_s(1,1) = 1$ and $Y_s(1,0) = 0$. For the type pair with dominant best-responses, $D_{s'}(z_2,z_1,1) = D_{s'}(z_2,z_1,0) = 0$, and $Y_s(1,D_{s'}(z_2,z_1,1)) = Y_s(1,0) = 0$. For the type super-modular best responses, $D_{s'}(z_2,z_1,1) = 1$ and hence, $Y_s(1,D_{s'}(z_2,z_1,1)) = Y_s(1,1) = 1$. Hence any $\mu$ that puts weight on only the dominant type will have a point identified set at $\{0\}$. A measure that puts convex weights on each type can generate the identified set $[0,1]$. To complete the proof notice that the supermodularity assumption does not rule out any type,  hence $\Gamma\left(\mathcal{T}^{counter} \right) = \Gamma\left(\mathcal{T}^{counter}_{supermodular} \right)$, and the submodularity assumption picks out only the dominant strategy type, hence $\Gamma\left(\mathcal{T}^{counter}_{dominant} \right) = \Gamma\left(\mathcal{T}^{counter}_{submodular} \right)$.
	
\end{proof}

	\begin{proof}[Proof of Theorem \ref{thm:equivalence_symmetry}]
	
	By construction, $\Theta(\mathcal{T}_{symmetric}) \subseteq \Theta(\mathcal{T})$, since the symmetric set is a weak subset of the other.
	
	We now prove that $\Theta(\mathcal{T}) \subseteq \Theta(\mathcal{T}_{symmetric})$.  Analogous to the proof of Theorem \ref{thm:equivalence_dominance}, we focus on a specific $t = (s,s',e)$ with a deterministic equilibria selection rule, an offer pair $(z,z') = (0,0)$, and a representation of the measure $\mu$ via a probability vector $q$ over the finite set of elements in $\mathcal{T}$. By Assumption \ref{assumption:closure_symmetric} we can find an alternative pair $\tilde{t} = (\tilde{s},\tilde{s}',e)$ where the best response functions given $(z,z')$ are symmetric, and everything else is equal. By definition $p(y,y',d,d' \mid z,z',t) = p(y,y',d,d' \mid z,z', \tilde{t})$. We can define an updated measure $\tilde{q}$ such that $ \tilde{q}_{ss'e} = 0$, and
	$ \tilde{q}_{\tilde{s}\tilde{s}'e} = q_{\tilde{s}\tilde{s}'e} + q_{ss'e}.$
	Since $t$ and $\tilde{t}$ have the same function mapping types to observables, then this rearrangement still satisfies the constraint. Similarly since $t$ and $\tilde{t}$ generate the same potential outcomes by Assumption \ref{assumption:closure_symmetric}.(iii), then it is also possible to generate $\theta$ with $\tilde{q}$. We can continue recursively updating the probability vector $q$ with other offer pairs $(z,z')$ until we've obtained a measure that is only supported on symmetric types.
	
\end{proof}

		\begin{proof} \
	As illustrated in Table \ref{tab:stategy_equilibria_table}, if we impose both symmetry and dominance restrictions, it is not possible to see $(D_{ig},D_{jg}) \in \{(0,1),(1,0)\}$.
	
\end{proof}


\section{Alternative Inference }\label{AppInferencealt}

In this section we provide inference results following \cite{FREYBERGER201541}.  Concretely, the population problem 

\begin{align*}
		\max_{q, \Lambda \in \mathcal{H}(\mathcal{S})} &c'q \ \quad s.t. \nonumber \\
		(A \circ \Lambda) q &= p \nonumber \\
		0_{S^2 \times 1} \leq q &\leq 1_{S^2 \times 1},
\end{align*}

As discussed earlier, multiple equilibria can be neglected from the programming. Which implies that the previous program can be restated as: 
\begin{align}
	\max_{q} & \quad c'q \ \quad s.t. \nonumber \\
	A q &= p \nonumber \\
	0_{S^2 \times 1} \leq q &\leq 1_{S^2 \times 1},
\end{align}
Where $q$ is a $S^2 \times 1$ vector.  Note that imposing $0_{S^2 \times 1} \leq q$ and $q \leq 1_{S^2 \times 1}$ is is equivalent to $0_{S^2 \times 1} \leq q$ and $1_{1 \times S^2} q=1$. Thus, the problem is equivalent to: 

\begin{align}
	\label{eq:generic_lin_progpopulation2}
	\max_{q} & \quad c'q \ \quad s.t. \nonumber \\
	\begin{pmatrix}
		A \\
		1_{1 \times S^2}
	\end{pmatrix} q &= \begin{pmatrix}
		p \\
		1
	\end{pmatrix}  \nonumber \\
	0_{S^2 \times 1} \leq q,
\end{align}

The  sample analog problem is: 

\begin{align}
	\label{eq:generic_lin_progsample}
	\max_{q} & \quad c'q \ \quad s.t. \nonumber \\
	\begin{pmatrix}
		A \\
		1_{1 \times S^2}
	\end{pmatrix} q &= \begin{pmatrix}
		\hat{p} \\
		1
	\end{pmatrix}  \nonumber \\
	0_{S^2 \times 1} \leq q,
\end{align}
Where $\hat{p}$ is the vector of estimated observed probabilities. 
\par 
To the end of obtaining valid inference, we follow \cite{FREYBERGER201541}. 
We start by noting that this is a program in standard form. In this case the objective function is maximized, all constraints are equalities and all variables of optimization are non-negative.  
We define the constraint matrix as: 
\begin{eqnarray*}
	\tilde{A}_{(K+1) \times S^2}=\begin{pmatrix}
		A_{K \times S^2} \\
		1_{1 \times S^2} 
	\end{pmatrix}   
\end{eqnarray*}
While the vector on the right hand side of the the constraint is:
\begin{eqnarray}
	\tilde{p}_{(K+1) \times 1}=\begin{pmatrix}
		p_{K \times 1} \\
		1
	\end{pmatrix}    
\end{eqnarray}
Which implies then that the standard form of the previous programs are
\begin{align}
	\label{Linear program Standard Form}
	\max_{q} c' q \ \quad s.t. \nonumber \\
	\tilde{A}q &= \tilde{p} \nonumber \\
	q &\geq 0_{S^2 \times 1} 
\end{align}
\begin{align}
	\label{Linear program Standard Form Sample}
	\max_{q} c' q \ \quad s.t. \nonumber \\
	\tilde{A}q &= \Bar{\tilde{p}} \nonumber \\
	q &\geq 0_{S^2 \times 1} 
\end{align}
Let $q_{opt}(\tilde{p})$ be an optimal solution to \ref{Linear program Standard Form}. Let $q_{b,opt}(\tilde{p})$ denote the $(S^2)\times 1$ vector of the basic variables\footnote{ A basic solution to a system of equations $\tilde{A}q = \tilde{p} $ is defined by choosing a matrix of $K+1$ of the $S^2$ columns of $\tilde{A}$. If this matrix is nonsingular and all of the $S^2-K-1$ variables  which are not associated with these columns are set equal to zero, then the solution to the resulting system is basic. The variables associated with the $K+1$ columns are called basic while the remaining are non basic.  In our context, a basic solution is formed by  a set of latent types which have positive probability as a solution of the optimization problem and do not imply linear combinations on the constraints associated to them.}. Let $\tilde{A}_{b}$ denote the $(K+1)\times (K+1)$  matrix formed by the columns of $\tilde{A}$ corresponding to basic variables. Similarly let $\tilde{c}_b$ be the $(K+1)\times 1$ vector of components of $c$ corresponding to the basic variables.  Then let the optimal solution:
\begin{eqnarray}\label{EQ:Solution}
	q_{b,opt}(\tilde{p})=\tilde{A}_{b}^{-1}\tilde{p}  
\end{eqnarray}
While the optimal value function: 
\begin{eqnarray}\label{EQ:Value function}
	W_{b,opt}(\tilde{p})=\tilde{c}_b'\tilde{A}_{b}^{-1}\tilde{p}  
\end{eqnarray}

Let a generic value function among basic solutions $ W_{b}$ which is not necessarily one of the optimal basic solutions.
\par 
For the sake of clarity we list the regularity conditions for this step which are taken from \cite{FREYBERGER201541} as well as additional ones to take into consideration the embedding of the optimization (in particular point 5):
\begin{assumption}[Regularity conditions]\label{ASInference1}
	Let: 
	\begin{itemize}
		\item $p_i$ for $i=1,...K$ are estimated by $\hat{p}_i$ which are consistent ($P(\lim_{n \rightarrow \infty}\hat{p}_i=p_i)=1$).
		\item Every set of $K+1$ columns of the vector $\tilde{A}\tilde{p}$ is linearly independent. This is a technical and testable condition that implies that the basic optimal solutions are non-degenerate (not  equal to 0).

		\item The second moments  conditional on the value of the instruments of $1(Y_i=y_i,Y_j=y_j,D_i=d_i,D_j=d_j)$ are finite.

		\item Let $\sqrt{n}\begin{pmatrix}
			\hat{p}_1-p_1 \\
			\cdots \\
			\hat{p}_K-p_K
		\end{pmatrix}$ as $n \rightarrow \infty$ converge to $N(\bold{0}, \Sigma_p)$. Where $\Sigma_p$ is the asymptotic covariance matrix. 
	\end{itemize}
\end{assumption}
Note that nothing assures the existence of a unique maximum. Let the set of optimal basic solutions be denoted as $\mathcal{B}_{max}$ and let any generic basic solution be denoted by $b$. By \cite{FREYBERGER201541} theorems 1 and 2, we know that the asymptotic behavior of the estimated optimal value function is:
\begin{eqnarray*}
	\sqrt{n}[\max_b  W_{b}(\tilde{p})-\max_b  W_{b}(\bar{\tilde{p}})]&=& \sqrt{n}\max_{b \in \mathcal{B}_{max}}[ W_{b,opt}(\tilde{p})-  W_{b,opt}(\bar{\tilde{p}})]+o_p(1)   
\end{eqnarray*}
if the optimal basic solution is unique, the asymptotic distribution is normal by a direct application of the delta method.

\subsection{Implementation}
To implement and get valid confidence regions for the optimal value functions, we follow also \cite{FREYBERGER201541} bootstrap proccedure which is valid in our context since under our regularity conditions Theorem 3 from \cite{FREYBERGER201541} holds. Concretely, the procedure is:
\begin{algorithm}
	\begin{enumerate}
		\item Generate a Bootstrap sample $\{Y_{i}^*, Y_j^*, D_i^*, D_j^*, Z_i^*, Z_j^* \}$ of size $n$ sampling randomly with replacement from the observed data $\{Y_i, Y_j, D_i, D_j, Z_i, Z_j \}$ of size $n$. 
		\item Compute the bootstrap version of $\bar{\tilde{p}}$, namely $\bar{\tilde{p}}^*$
		\item Define the bootstrap version of the sample optimization problem as 
		\begin{align}
			\max_{q} c' q \ \quad s.t. \nonumber \\
			\tilde{A}q &= \Bar{\tilde{p}}^* \nonumber \\
			q &\geq 0_{S^2} 
		\end{align}
		Solve this problem. If the problem has no feasible solution for a few bootstrap samples, those samples can be discarded. 
		\item Let $ W_{b,opt}(\bar{\tilde{p}}^*)=\tilde{c}_b'\tilde{A}_{b}^{-1*}\bar{\tilde{p}}^* $ be the optimal value function for the basic solution $b$ in this bootstrap sample.  Furthermore, for a basic optimal solution $b$ calculate: 
		\begin{eqnarray*}
			\Delta_{b}^*=\sqrt{n}(\tilde{c}_b'\tilde{A}_{b}^{-1*}\bar{\tilde{p}}^*-\tilde{c}_b'\tilde{A}_{b}^{-1}\bar{\tilde{p}})   
		\end{eqnarray*}
		\item Repeat steps 1-4 many times. Define $\hat{\mathcal{B}}_{max}=\{b: |W_{b,opt}(\bar{\tilde{p}}^*)-\max_b W_{b,opt}(\bar{\tilde{p}})| \leq c_n \}$. Where $c_n$ is such that $c_n [\frac{n}{log(log(n))}]^{\frac{1}{2}} \rightarrow \infty$ as $n \rightarrow \infty$.
		\item Estimate the distribution of $ \sqrt{n}[\max_b  W_{b}(\tilde{p})-\max_b  W_{b}(\bar{\tilde{p}})]$ by the empirical distribution of $\max_{b \in \hat{\mathcal{B}}_{max}} \Delta_{b}^*$. Then, $P(\sqrt{n}[\max_b  W_{b}(\tilde{p})-\max_b  W_{b}(\bar{\tilde{p}})] \leq c_{\alpha}) \rightarrow 1-\alpha$. Where $c_{\alpha}$ comes from solving the value $c$ such that the empirical distribution of $\max_{b \in \hat{\mathcal{B}}_{max}} \Delta_{b}^*$  satisfies,
		$P(\max_{b \in \hat{\mathcal{B}}_{max}} \Delta_{b}^*\geq -c)=1-\alpha$
		
	\end{enumerate} 
\end{algorithm}

\section{Connection with other approaches}

\subsection{\citet{kline2012bounds}}
\label{app:Kline_Tamer_2012}

Because in \citet{kline2012bounds} (KT2012) there is no instrumental variation, our $D_1(z_1,z_2,d_2)=D_1(d_2)$ and $D_2(z_1,z_2,d_2)=D_2(d_1)$ play the role of the best response functions (BRF), denoted $v^1(t_2), v^2(t_1)$ in (KT2012). We can regard $(D_1(0,\omega), D_1(1,\omega), D_2(0,\omega), D_2(1,\omega))$ as a vector of random variables that depends on the latent type $\omega$. Then, (KT2012)'s target is the sharp identified set for $P(D_1(0,\omega)=1),P(D_1(1,\omega)=1),P(D_2(0,\omega)=1),P(D_2(1,\omega)=1)$.
In turn, due to the nature of the entry game, the assumptions rule out many latent types. \\

\noindent \textbf{(KT2012) Assumption 2.4} Only pure strategies.

\noindent This allows us to completely characterize the latent type space with $2^4$ by $2^4$ types. \\

\noindent \textbf{(KT2012) Implicit assumption} $\pi^1(0,y_2) = \pi^2(y_1,0) = 0$. \\

\noindent \textbf{(KT2012) Assumption 2.1}  $\pi^1(1,1) \neq 0; \ \pi^1(1,0) \neq 0; \ \pi^2(1,1) \neq 0; \ \pi^2(0,1) \neq 0$ \\

\noindent \textbf{(KT2012) Assumption 2.2} $\pi^1(1,0) \geq \pi^1(1,1); \pi^2(0,1) \geq \pi^2(1,1)$. \\

\noindent We show below that this enforces our strategic substitution assumption:
$D_1(0) \geq D_1(1), D_2(0) \geq D_2(1)$. Also, it is related to the monotone outcome response from \citet{Manski1997}. \\

\noindent \textbf{(KT2012) Assumption 2.3} i) $\pi^1(1,1) > 0 \Leftrightarrow \pi^2(1,1) > 0$; ii) \ $\pi^1(1,0) > 0 \Leftrightarrow \pi^2(0,1) > 0$. \\

While primitive identification conditions are stated on the profit functions $\pi^1(\cdot)$ and $\pi^2(\cdot)$, the ordering of the different values of the profit function is a 
sufficient statistic for the BRF in pure strategies. This is consistent with (KT2012)'s view of the profit function as latent, and possible orderings induce equivalence classes of profit functions. Below are all possible profit orderings for $\pi^1(\cdot)$, under assumption 2.1 and the implicit assumption and their implied BRF.

\begin{table}[H]
	
	\centering
	\begin{tabular}{|c|c|c|c|}
		\hline 
		Ordering & $D_1(0)$ & $D_1(1)$ & Assumption 2.2 \\
		\hline 
		$\pi^1(0,0) = \pi^1(0,1) > \pi^1(1,1) = \pi^1(1,0)$ & 0 & 0 & \checkmark $(=)$ \\
		$\pi^1(0,0) > \pi^1(0,1)=\pi^1(1,0) > \pi^1(1,1)$ & 0 & 0 & \checkmark \\
		$\pi^1(0,0) > \pi^1(0,1)=\pi^1(1,1) > \pi^1(1,0)$ & 0 & 0 & $\times$ \\
		$\pi^1(1,1) > \pi^1(0,0)=\pi^1(0,1) > \pi^1(1,0)$ & 0 & 1 & \checkmark \\ 
		$\pi^1(1,0) > \pi^1(0,0)=\pi^1(0,1) > \pi^1(1,1)$ & 1 & 0 & $\times$  \\ 
		$\pi^1(1,0) = \pi^1(1,1)=\pi^1(0,1) > \pi^1(1,1)$ & 1 & 1 & \checkmark  \\ 
		$\pi^1(1,0) > \pi^1(1,1)=\pi^1(0,1) > \pi^1(1,1)$ & 1 & 1 & \checkmark $(=)$ \\ 
		$\pi^1(1,1) > \pi^1(1,0)=\pi^1(0,1) > \pi^1(1,1)$ & 1 & 1 & $\times$  \\ 
		\hline
	\end{tabular}
\end{table}

A similar classification holds for $\pi^2(\cdot)$. We can notice immediately that the mapping from profit orderings to BRF is not injective, so even the orderings contain redundancies. Monotonicity rules out the $D^1(0)=0, D^1(1)=1$ BRF, essentially ruling out strategic complementarity among players. We take the BRF values directly as the latent types. A common encoding of types makes also clarifies that Assumption 2.3 has no power to restrict the marginal distributions of BRFs but only the joint BRFs. 

\begin{table}[H]
	\centering
	\begin{tabular}{|c|c|c|c|c|}
		\hline
		& \multicolumn{4}{|c|}{$(D_{2g}(0),D_{2g}(1))$} \\
		\hline
		$(D_{1g}(0),D_{1g}(1))$ & \textbf{0,0} & \textbf{0,1} & \textbf{1,0} & \textbf{1,1} \\
		\hline
		\textbf{0,0} & {\centering (0,0)} & {\centering (0,0)\bluetimes} & {\centering (0,1)\redtimes} & {\centering (0,1)\bluetimes\redtimes} \\
		\hline
		\textbf{0,1} & {\centering (0,0)\bluetimes} & {\centering (0,0); (1,1)} & {\centering $\varnothing$\bluetimes\redtimes} & {\centering (1,1)\redtimes} \\
		\hline
		\textbf{1,0} & {\centering (1,0)\redtimes} & {\centering $\varnothing$\bluetimes\redtimes} & {\centering (1,0); (0,1)} & {\centering (0,1)\bluetimes} \\
		\hline
		\textbf{1,1} & {\centering (1,0)\bluetimes\redtimes} & {\centering (1,1)\redtimes} & {\centering (1,0)\bluetimes} & {\centering (1,1)} \\
		\hline
	\end{tabular}
	\caption{Type table. A blue ({\bluetimes}) indicates a violation of Assumption 2.3 i) in (KT2012). A red ({\redtimes}) indicates a violation of Assumption 2.3 ii) in (KT2012). A joint type that has no x mark satisfies (Assumption 2.3 in (KT2012).}
	\label{tab:consolidated_table}
\end{table}

Assumption 2.3 does not restrict the marginal distributions because each marginal BRF type is still feasible. Also, by itself, Assumption 2.3 is not falsifiable as it features all possible observable equilibria. The fact that assumptions 2.1+2.3 are not sufficient to point-identify the probabilities of the response functions is related to the multiple equilibria issue. Here, there are 6 latent parameters, using our notation $(q_{00,00}, q_{01,01}, q_{10,10}, q_{11,11}, \lambda_c, \lambda_s)$ where 4 are the probabilities on the BRF and 2 are the equilibrium selection mechanism (conditional probabilities). There are only 4 observable probabilities $(p_{00}$, $p_{10}$, $p_{01}$, $p_{11})$. The two are related by:

\begin{align*}
	p_{00} &= q_{00,00} + q_{01,01} \cdot \lambda_c \\
	p_{01} &= \lambda_s \cdot q_{10,10} \\
	p_{10} &= (1- \lambda_s) \cdot q_{10,10} \\
	p_{11} &= q_{11,11} + q_{01,01} \cdot (1-\lambda_c) \\
\end{align*}

If Assumption 2.2 is added, immediately $q_{01,01} = 0$. In this case the restriction of this map to the sub-simplex becomes:

\begin{align*}
	p_{00} &= q_{00,00} \\
	p_{01} &= \lambda_s \cdot q_{10,10} \\
	p_{10} &= (1- \lambda_s) \cdot q_{10,10} \\
	p_{11} &= q_{11,11} \\
\end{align*}

Now the map is invertible, so the latent probabilities (and the equilibrium selection mechanism) are identified. This is why Assumptions 2.1-2.4 are sufficient for point identification (Corollary 2.4, KT2012).

\section{Numerical Delta Method validity and details}
In order to be able to apply \cite{hong2018numerical} numerical delta method, we first need to show that our object of interest is Hadamard directionally differentiable. More precisely, 
\begin{definition}
	Let $\phi : D \rightarrow E$ where $D$, $E$ are Banach spaces. Say $\phi$ is Hadamard directionally differentiable at $\tilde{p} \in D_{\phi}$ tangentially to a set $D_0 \subset D$ if a continuous map $\phi^{\prime}_{\tilde{p}}: D_0 \rightarrow E$ exists, $\forall h_m \in D$, and $t_m \subset \mathcal{R}^+$   $t_m \downarrow 0$, $||h_m-h|| \rightarrow 0$, as $m \rightarrow \infty$ such that:
	\begin{eqnarray*}
		\Bigg| \Bigg| \frac{\phi(\tilde{p}+t_mh_m)-\phi(\tilde{p})}{t_m}- \phi^{\prime}_{\tilde{p}}(h)\Bigg |\Bigg|_E \rightarrow 0
	\end{eqnarray*}
\end{definition}
In our case the object we need to show satisfies this is for example
\begin{align*}
 \min_{\mu \in \Delta(\mathcal{T})} &\ c_{\theta}^T \mu \\
    \text{s. t.} &\ A \mu = p
\end{align*}
The previous program can be re-written by doing the proper adjustments as: 

\begin{align*}
 \min_{\tilde{\mu}}  &\ \tilde{c}_{\theta}^T \tilde{\mu} \\
    \text{s. t.} &\ \tilde{A} \tilde{\mu} \leq \tilde{p} \\
     &\ \tilde{\mu} \geq \boldsymbol{0}
\end{align*}
Due to strong duality of linear programs, see for example \cite{vanderbei1998linear}, theorem 5.2, we know that an equivalent problem that gives the same optimal value function is : 

\begin{align*}
 \max_{\tilde{\tau}}  &\ \tilde{p}^T \tilde{\tau} \\
    \text{s. t.} &\ \tilde{A}^T \tilde{\tau} \geq \tilde{c}_{\theta} \\
     &\ \tilde{\tau} \geq \boldsymbol{0}
\end{align*}
Let $\tilde{\tau}^*$ be the optimal solution to this last problem. In our context $\phi(\tilde{p})=\max_{\tilde{\tau}: \tilde{A}^T \tilde{\tau} \geq \tilde{c}_{\theta}, \tilde{\tau} \geq \boldsymbol{0} }\tilde{p}^T\tilde{\tau}$. $\tilde{p} \in  [0,1]^d$, where $d \leq |\Delta(K)|$. 
Thus, if we can show that $S= \{ \tilde{\tau}: \tilde{A}^T \tilde{\tau} \geq \tilde{c}_{\theta}, \tilde{\tau} \geq \boldsymbol{0} \}$ is a totally bounded set under certain norm, by Lemma B.1 of \cite{fang2019inference} Haddamard directionally differentiability tangentially to the space of continuous functions from  the closure of $S$ to $\mathbb{R}$ (where $\phi$ goes from $l^{\infty}(S)\equiv \{f: S \rightarrow \mathbb{R}, ||f||_{\infty} \leq \infty\}$  to $\mathbb{R}$) follows and by duality it applies to our value function, and thus the numerical delta method can be applied. 

A totally bounded set is a set $A$
 in a metric space $(M,d)$  if, given any $\epsilon>0$
, there exist finitely many points $x_1,..,x_i,...x_n$ in $M$ such that $A \subset \cup_{i=1}^n B_{\epsilon}(x_i)$. Where $B_{\epsilon}(x_i)$ is an epsilon ball centered in $x_i$. The feasible set will satisfy this condition if for example the half spaces $\tilde{a}_k\tilde{\tau} \geq \tilde{c}_{k,\theta}$ with the positive orthant forms a closed convex set. 
\par 
If the previous condition is verified, Haddamard directionally differentiability holds and then is valid to use the numerical delta method. More precisely  we can then perform the following procedure: 
\begin{enumerate}
\item Estimate $\tilde{p}$ by $\widehat{\tilde{p}}$ and then estimate $\phi(\tilde{p})=\max_{\tilde{\tau}: \tilde{A}^T \tilde{\tau} \geq \tilde{c}_{\theta}, \tilde{\tau} \geq \boldsymbol{0} }\tilde{p}^T\tilde{\tau}$ by  $\phi(\widehat{\tilde{p}})=\max_{\tilde{\tau}: \tilde{A}^T \tilde{\tau} \geq \tilde{c}_{\theta}, \tilde{\tau} \geq \boldsymbol{0} }\widehat{\tilde{p}}^T\tilde{\tau}$
      \item Generate a bootstrap procedure for $s=1,...,S$  and generate a random variable computing the difference between the estimator in the original sample and the respective bootstrap sample $s$.  Call it $W_s=\sqrt{n}(\widehat{\tilde{p}}_s^*-\widehat{\tilde{p}})$. 
  \item For a chosen step size $e_n$ depending on the sample size, compute for each $s$ following \cite{hong2018numerical}: $\frac{\max_{\tilde{\tau}: \tilde{A}^T \tilde{\tau} \geq \tilde{c}_{\theta}, \tilde{\tau} \geq \boldsymbol{0} }[\widehat{\tilde{p}}+e_nW_s]^T\tilde{\tau}-\max_{\tilde{\tau}: \tilde{A}^T \tilde{\tau} \geq \tilde{c}_{\theta}, \tilde{\tau} \geq \boldsymbol{0} }\widehat{\tilde{p}}^T\tilde{\tau}}{e_n}$. Name for each $s$, name this object  $\phi^{\prime}_s$. We will use this to approximate the distribution of $\phi(\tilde{p})$. We need $e_n \rightarrow 0$ while $ \sqrt{n}e_n \rightarrow   \infty$. Thus, we could set $e_n=n^{-\frac{1}{3}}$ or $e_n=n^{-\frac{1}{6}}$. 
  \item We then have $\phi^{\prime}_s$ for each $s=1,…,S$ which is an empirical distribution, we can use the empirical percentiles of the absolute value of this object say $c_{\tau}$ to create a confidence interval for $\phi(\tilde{p})$ as  $(\phi(\widehat{\tilde{p}})-n^{-1/2}c_{1-\alpha}; \phi(\widehat{\tilde{p}})+ n^{-1/2}c_{1-\alpha})$. 
\end{enumerate}

 It is worth noting that $W_s$ is normalized by $\sqrt{n}$ and $e_n$ is chosen such that  $ \sqrt{n}e_n \rightarrow   \infty$ since $\sqrt{n}$ is the convergence rate of $\widehat{\tilde{p}}$. For different convergence rates the numerical delta method should be adjusted accordingly. The procedure was presented for the dual but due to the equivalence it could also be presented for the primal.

\section{Additional Result: \cite{Vazquez-Bare2020} assumptions imply a latent index structure representation}
In this section we follow \cite{Vytlacil2002} to find a latent index structure equivalent to the one in \cite{Vazquez-Bare2020}. 
Restating the  main structure for the treatment variable from  \cite{Vazquez-Bare2020} we have: 
\begin{eqnarray*}
	D_i&=&D_i(1,1)Z_iZ_j+D_i(0,0)(1-Z_i)(1-Z_j)+ D_i(1,0)Z_i(1-Z_j)+D_i(0,1)(1-Z_i)Z_j  
\end{eqnarray*}
\begin{assumption}[Independence]\label{ASMon}
	$$Z_i, Z_j \independent D_i(z_i,z_j)  \quad \forall z_i, z_j $$ 
\end{assumption}
\begin{assumption}[Monotonicity a la \cite{Vazquez-Bare2020}]\label{ASMon}
	$$D_i(1,1) \geq  D_i(1,0) \geq D_i(0,1) \geq D_i(0,0)$$   
\end{assumption}
Note that $D_i(z_i, z_j) \in \{0,1\}$ is in fact a random variable which is a function of realization of a sample space $\Omega$ associated with some probability space. Specifically, $D_i(Z_i, Z_j)=D_i(Z_i,Z_j,\omega)$ where $\omega \in \Omega$ and also $Z_i \in \{0,1\}$,  $Z_j \in \{0,1\}$. 

Then, we have the following groups: 
$$\text{Always Takers}(AT) \equiv \{\omega: D_i(1,1, \omega)=  D_i(1,0,\omega)=D_i(0,1,\omega)=D_i(0,0,\omega)=1\},$$
$$\text{Social Compliers}(SC)\equiv\{\omega: D_i(1,1,\omega)=  D_i(1,0,\omega)=D_i(0,1,\omega)=1, D_i(0,0,\omega)=0\},$$
$$\text{Compliers}(C)\equiv\{\omega: D_i(1,1,\omega)=  D_i(1,0,\omega)=1, D_i(0,1,\omega)=D_i(0,0,\omega)=0 \},$$
$$\text{Group Compliers}(GC)\equiv \{\omega: D_i(1,1,\omega)=1,   D_i(1,0,\omega)=D_i(0,1,\omega)=D_i(0,0,\omega)=0\},$$
$$\text{Never Takers}(NT)\equiv\{\omega: D_i(1,1,\omega)=   D_i(1,0,\omega)=D_i(0,1,\omega)=D_i(0,0,\omega)=0\}.$$
\par 
Let $D(z_i, z_j,d)^{-1}= \{\omega: D(z_i, z_j,\omega)=d\}$ for $d=\{0,1\}$. This permits us to represent the previous groups in the following equivalent way:
\par 
$$AT \equiv D(1,1,1)^{-1} \cap D(1,0,1)^{-1} \cap D(0,1,1)^{-1} \cap D(0,0,1)^{-1} ,$$
$$SC \equiv D(1,1,1)^{-1} \cap D(1,0,1)^{-1} \cap D(0,1,1)^{-1} \cap D(0,0,0)^{-1},$$
$$ C \equiv D(1,1,1)^{-1} \cap D(1,0,1)^{-1} \cap D(0,1,0)^{-1} \cap D(0,0,0)^{-1},$$
$$GC \equiv D(1,1,1)^{-1} \cap D(1,0,0)^{-1} \cap D(0,1,0)^{-1} \cap D(0,0,0)^{-1} ,$$
$$NT \equiv D(1,1,0)^{-1} \cap D(1,0,0)^{-1} \cap D(0,1,0)^{-1} \cap D(0,0,0)^{-1} .$$
\par 
Let $\mathcal{Z}_{d,j}\equiv \{z_i, z_j \in \mathcal{Z}\times \mathcal{Z} \quad \text{for} \quad \omega \in j: D(z_i,z_j)=d  \}.$ Then: 

$$\mathcal{Z}_{0,AT}=\emptyset,$$
$$\mathcal{Z}_{1,AT}=\{(1,1); (1,0); (0,1); (0,0) \},$$
$$\mathcal{Z}_{0,SC}=\{(0,0) \},$$
$$\mathcal{Z}_{1,SC}=\{(1,1); (0,1); (1,0) \},$$
$$\mathcal{Z}_{0,C}=\{(0,0); (0,1) \},$$
$$\mathcal{Z}_{1,C}=\{(1,1); (1,0) \},$$
$$\mathcal{Z}_{0,GC}=\{(1,0);(0,0); (0,1) \},$$
$$\mathcal{Z}_{1,GC}=\{(1,1) \},$$
$$\mathcal{Z}_{0,NT}=\{(1,1);(1,0);(0,0); (0,1) \},$$
$$\mathcal{Z}_{1,NT}=\emptyset.$$
\par 
Note also that,\footnote{If we did not had independence, \begin{eqnarray*}
		P(D_i=1|Z_i=z_i, Z_j=z_j)&=&P(D_i(z_i,z_j,\omega)=1|Z_i=z_i, Z_j=z_j)=P(D_i(z_i,z_j,\omega)=1) \\
		&\equiv& p_{D(z_i,z_j)}(z_i,z_j).
	\end{eqnarray*}
	\begin{eqnarray*}
		p_{D(z_i\prime,z_j \prime)}(z_i,z_j) &\in& \Bigg\{ p_{D(0,0)}(0,0), p_{D(0,0)}(0,1), p_{D(0,0)}(1,0), p_{D(0,1)}(0,0), \\ && p_{D(1,0)}(0,0), p_{D(1,1)}(1,1), p_{D(1,1)}(1,0), p_{D(1,1)}(0,1), \\&& p_{D(1,0)}(1,1), p_{D(0,1)}(1,1), p_{D(1,1)}(0,0), p_{D(1,0)}(1,0), \\&& p_{D(0,1)}(1,0), p_{D(1,0)}(0,1), p_{D(0,1)}(0,1), p_{D(0,0)}(1,1) \Bigg \} 
\end{eqnarray*}} 
\begin{eqnarray*}
	P(D_i=1|Z_i=z_i, Z_j=z_j)&=&P(D_i(z_i,z_j,\omega)=1|Z_i=z_i, Z_j=z_j)=P(D_i(z_i,z_j,\omega)=1) \\
	&\equiv& p(z_i,z_j).
\end{eqnarray*}
Then, 
$$p(z_i,z_j) \in \{p(1,1),  p(1,0),  p(0,1),  p(0,0) \}.$$ 
The monotonicity structure, imposes restrictions on how big are these probabilities in different groups. This is since, the probability of taking treatment varies depending on what type of individual $\omega$ you are. This is key in the construction of a random variable that is equivalent.
\par 
In particular, note that for $\omega \in SC$: 
$$p(0,0) \leq p(1,0), $$
$$p(0,0) \leq p(1,1), $$
$$p(0,0) \leq p(0,1). $$

	For $\omega \in C$:
	$$p(0,0) \leq p(1,0), $$
	$$p(0,1) \leq p(1,0), $$
	$$p(0,0) \leq p(1,1), $$
	$$p(0,1) \leq p(1,1). $$

		For $\omega \in GC$:
		$$p(0,0) \leq p(1,1), $$
		$$p(1,0) \leq p(1,1), $$
		$$p(0,1) \leq p(1,1). $$

			Or written in a more compact way: 
			For $\omega \in SC$: 
			$$\sup_{z_i, z_j \in \mathcal{Z}_{0,SC}} p(z_i,z_j) \leq \inf_{z_i, z_j \in \mathcal{Z}_{1,SC}} p(z_i,z_j).$$
			For $\omega \in C$:
			$$\sup_{z_i, z_j \in \mathcal{Z}_{0,C}} p(z_i,z_j) \leq \inf_{z_i, z_j \in \mathcal{Z}_{1,C}} p(z_i,z_j).$$
			For $\omega \in GC$:
			$$\sup_{z_i, z_j \in \mathcal{Z}_{0,GC}} p(z_i,z_j) \leq \inf_{z_i, z_j \in \mathcal{Z}_{1,GC}} p(z_i,z_j).$$
			The previous is intuitive given the monotonicity restriction and the set of individuals in each type, but it can be shown formally by contradiction as in Lemma 1 in \cite{Vytlacil2002}. 
			\par 
			Now that we have collected these relevant facts, we can propose the following latent index model which we will show it is equivalent. 
			\par 
			The proposed selection model is: 
			\begin{eqnarray*}
				\tilde{D}_i(Z_i,Z_j, \omega)=1\{p(Z_i,Z_j) \geq U(\omega)\}  
			\end{eqnarray*}
			Where, 
			$$U(\omega)=0 \quad \text{for} \quad \omega \in AT,$$
			$$U(\omega)=1 \quad \text{for} \quad \omega \in NT,$$
			$$U(\omega)=\inf_{z_i, z_j \in \mathcal{Z}_{1,SC}} p(z_i,z_j) \quad \text{for} \quad \omega \in SC,$$
			$$U(\omega)=\inf_{z_i, z_j \in \mathcal{Z}_{1,C}} p(z_i,z_j) \quad \text{for} \quad \omega \in C,$$
			$$U(\omega)=\inf_{z_i, z_j \in \mathcal{Z}_{1,GC}} p(z_i,z_j) \quad \text{for} \quad \omega \in GC.$$

			Technically we would need to show that $U$ is a valid random variable and that the instruments are independent of the joint distribution of $U$ and the potential outcomes. These  follow Lemmas 2 and 3 from \cite{Vytlacil2002}. The main point is then to show: 
			$$ \tilde{D}_i(Z_i,Z_j, \omega)= D_i(Z_i,Z_j, \omega) \quad  a.s.  $$
			Or equivalently, that 
			$$ P(\omega:\tilde{D}_i(Z_i,Z_j, \omega) \neq D_i(Z_i,Z_j, \omega))=0 $$
			Note that, 
			\begin{eqnarray*}
				P(\omega:\tilde{D}_i(Z_i,Z_j, \omega) \neq D_i(Z_i,Z_j, \omega))&=& P(\omega \in NT:\tilde{D}_i(Z_i,Z_j, \omega) \neq D_i(Z_i,Z_j, \omega)) \\
				&+& P(\omega \in AT:\tilde{D}_i(Z_i,Z_j, \omega) \neq D_i(Z_i,Z_j, \omega))\\
				&+& P(\omega \in SC:\tilde{D}_i(Z_i,Z_j, \omega) \neq D_i(Z_i,Z_j, \omega))\\
				&+& P(\omega \in C:\tilde{D}_i(Z_i,Z_j, \omega) \neq D_i(Z_i,Z_j, \omega))\\
				&+& P(\omega \in GC:\tilde{D}_i(Z_i,Z_j, \omega) \neq D_i(Z_i,Z_j, \omega))       
			\end{eqnarray*}
			Thus, if we can show that each of them are $0$ we proved the equivalence. Since all of them follow a similar logic, we will show that $P(\omega \in SC:\tilde{D}_i(Z_i,Z_j, \omega) \neq D_i(Z_i,Z_j, \omega))=0$ and conclude by symmetry. If $P(\omega \in SC)$ is $0$, then the result trivially holds, so lets focus on the case that it has positive probability. We consider the four possible cases. 
			\begin{enumerate}
				\item \begin{eqnarray*}
					P(\tilde{D}_i(1,1, \omega) \neq D_i(1,1, \omega)|\omega \in SC))&=& P(\tilde{D}_i(1,1, \omega)=1, D_i(1,1, \omega)=0|\omega \in SC) \\
					&+& P(\tilde{D}_i(1,1, \omega)=0, D_i(1,1, \omega)=1|\omega \in SC)
				\end{eqnarray*}
				If you are a social complier, $P( D_i(1,1, \omega)=0)=0$, thus we need to focus on $P(\tilde{D}_i(1,1, \omega)=0, D_i(1,1, \omega)=1|\omega \in SC)$. 
				\begin{eqnarray*}
					P(\tilde{D}_i(1,1, \omega)=0, D_i(1,1, \omega)=1|\omega \in SC)&=& \\ P(p(1,1) <\inf_{z_i, z_j \in \mathcal{Z}_{1,SC}} p(z_i,z_j) , D_i(1,1, \omega)=1|\omega \in SC) 
				\end{eqnarray*}
				But, $p(1,1) <\inf_{z_i, z_j \in \mathcal{Z}_{1,SC}} p(z_i,z_j)$ cannot happen if you are a social complier since $\inf_{z_i, z_j \in \mathcal{Z}_{1,SC}} p(z_i,z_j) \leq p(1,1)$ for this subpopulation. Thus it has $0$ probability. 
				\item  \begin{eqnarray*}
					P(\tilde{D}_i(1,0, \omega) \neq D_i(1,0, \omega)|\omega \in SC))&=& P(\tilde{D}_i(1,0, \omega)=1, D_i(1,0, \omega)=0|\omega \in SC) \\
					&+& P(\tilde{D}_i(1,0, \omega)=0, D_i(1,0, \omega)=1|\omega \in SC)
				\end{eqnarray*}
				If you are a social complier, $P( D_i(1,0, \omega)=0)=0$, thus we need to focus on $P(\tilde{D}_i(1,0, \omega)=0, D_i(1,0, \omega)=1|\omega \in SC)$. 
				\begin{eqnarray*}
					P(\tilde{D}_i(1,0, \omega)=0, D_i(1,0, \omega)=1|\omega \in SC)&=&  \\ P(p(1,0) <\inf_{z_i, z_j \in \mathcal{Z}_{1,SC}} p(z_i,z_j) , D_i(1,0, \omega)=1|\omega \in SC) 
				\end{eqnarray*}
				But, $p(1,0) <\inf_{z_i, z_j \in \mathcal{Z}_{1,SC}} p(z_i,z_j)$ cannot happen if you are a social complier since $\inf_{z_i, z_j \in \mathcal{Z}_{1,SC}} p(z_i,z_j) \leq p(1,0)$ for this subpopulation. Thus it has $0$ probability. 
				
				\item \begin{eqnarray*}
					P(\tilde{D}_i(0,1, \omega) \neq D_i(0,1, \omega)|\omega \in SC))&=& P(\tilde{D}_i(0,1, \omega)=1, D_i(0,1, \omega)=0|\omega \in SC) \\
					&+& P(\tilde{D}_i(0,1, \omega)=0, D_i(0,1, \omega)=1|\omega \in SC)
				\end{eqnarray*}
				This case follows the exact same logic as the previous two. 
				
				\item  \begin{eqnarray*}
					P(\tilde{D}_i(0,0, \omega) \neq D_i(0,0, \omega)|\omega \in SC))&=& P(\tilde{D}_i(0,0, \omega)=1, D_i(0,0, \omega)=0|\omega \in SC) \\
					&+& P(\tilde{D}_i(0,0, \omega)=0, D_i(0,0, \omega)=1|\omega \in SC)
				\end{eqnarray*}
				
				If you are a social complier, $P( D_i(0,0, \omega)=1)=0$, thus we need to focus on $P(\tilde{D}_i(0,0, \omega)=1, D_i(1,0, \omega)=0|\omega \in SC)$. 
				\begin{eqnarray*}
					P(\tilde{D}_i(0,0, \omega)=1, D_i(0,0, \omega)=0|\omega \in SC)&=&  \\ P(p(0,0)  \geq \inf_{z_i, z_j \in \mathcal{Z}_{1,SC}} p(z_i,z_j) , D_i(0,0, \omega)=0|\omega \in SC) 
				\end{eqnarray*}
				If you are a social complier, we know that $p(0,0)  \leq \inf_{z_i, z_j \in \mathcal{Z}_{1,SC}} p(z_i,z_j)$, then if this inequality holds strictly, we are done since $p(0,0)  \geq \inf_{z_i, z_j \in \mathcal{Z}_{1,SC}} p(z_i,z_j)$ cannot happen. 
				If the inequality holds weakly, then we need to focus on the case $$P(p(0,0)= \inf_{z_i, z_j \in \mathcal{Z}_{1,SC}} p(z_i,z_j) , D_i(0,0, \omega)=0|\omega \in SC)$$
				Recall that $p(0,0)=P(D_i(0,0, \omega)=1)$, thus if $D_i(0,0, \omega)=0$, $p(0,0)=0$, then: 
				$$P(0= \inf_{z_i, z_j \in \mathcal{Z}_{1,SC}} p(z_i,z_j) , D_i(0,0, \omega)=0|\omega \in SC)$$
				But, $\inf_{z_i, z_j \in \mathcal{Z}_{1,SC}} p(z_i,z_j)$ cannot be equal to $0$ since by monotonicity the probability of treatment has to be positive. Thus, $$P(p(0,0)= \inf_{z_i, z_j \in \mathcal{Z}_{1,SC}} p(z_i,z_j) , D_i(0,0, \omega)=0|\omega \in SC)=0$$
			\end{enumerate}
			This completes the proof of the latent index representation.

			\begin{figure}
				\centering
				\begin{tikzpicture}[scale=4] 
					
					\filldraw[black] (0,0) circle (1.5pt) node[anchor=north west] {\small $(0,0)$};
					\filldraw[black] (1,0) circle (1.5pt) node[anchor=north west] {\small $(1,0)$};
					\filldraw[black] (0,1) circle (1.5pt) node[anchor=north west] {\small $(0,1)$};
					\filldraw[black] (1,1) circle (1.5pt) node[anchor=north west] {\small $(1,1)$};
					
					\tikzset{
						solidline/.style={
							draw=orange,
							thick,
							opacity=0.7
						},
						bluefont/.style={
							text=blue
						}
					}
					
					\draw[solidline] (1.1,1.1) rectangle (1.25,1.25);
					\node[bluefont] at (1.15,1.25) {$D^{-1}_{NT}$};
					
					\draw[solidline] (0.85,0.85) rectangle (1.35,1.35);
					\node[bluefont] at (0.85,1.35) {$D^{-1}_{GC}$};
					
					\draw[solidline] (0.6,-0.2) rectangle (1.45,1.45);
					\node[bluefont] at (0.65,1.45) {$D^{-1}_{C}$};
					
					\draw[solidline] (1.55,1.55) -- (-0.15,1.55) -- (-0.15,0.45) -- (0.45,0.45) -- (0.45,-0.3) -- (1.55,-0.3) -- cycle;
					\node[bluefont] at (-0.15,1.55) {$D^{-1}_{SC}$};
					
					\draw[solidline] (-0.4,-0.4) rectangle (1.65,1.65);
					\node[bluefont] at (-0.4,1.65) {$D^{-1}_{AT}$};
					
				\end{tikzpicture}
				\caption{A representation of compliance types in \citet{Vazquez-Bare2020}. Orange lines are the pre-images of $D_i=1$ i.e. $D_i^{-1}(1) = \{z \in \mathcal{Z} \ s.t. \ D_i(z) = 1 \}$  (all instrumental assignments that lead $i$ to opt into treatment). They are shown for never-taker (NT), group-compliers (GC), compliers (C), social compliers (SC) and always-takers (AT). If no assumptions are imposed, compliance types are in a bijection with all subsets of the instrumental values. With the monotonicity condition in \citet{Vazquez-Bare2020}, only the ones shown above survive. It is an insight from \citet{Vytlacil2002} that these pre-images being nested is a characterization of monotonicity in this context.}
			\end{figure}
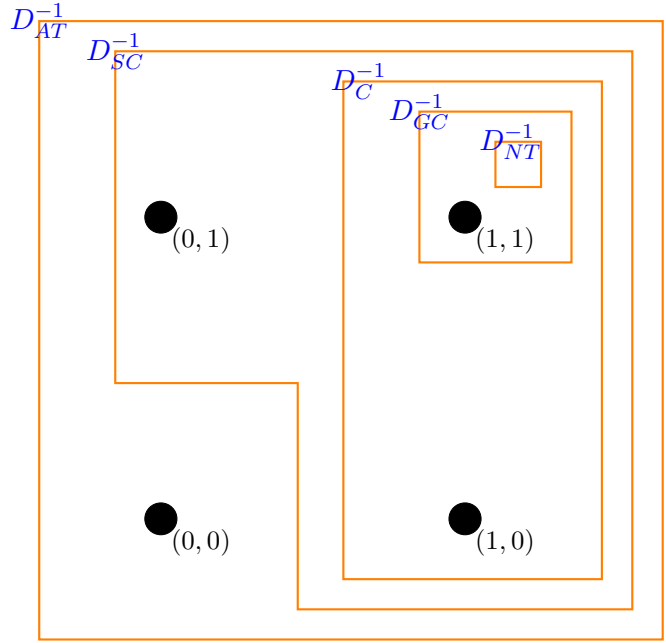

\end{document}